\newtheorem{thm}{Theorem}[section]
\newtheorem{cor}[thm]{Corollary}
\newtheorem{lem}[thm]{Lemma}
\newtheorem{conj}[thm]{Conjecture}
\newtheorem{assumption}[thm]{Assumption}
\newtheorem{definition}[thm]{Definition}
\newenvironment{defn}{\begin{definition}\rm}{\end{definition}}
\newtheorem{example}[thm]{Example}
\newenvironment{exam}{\begin{example}\rm}{\end{example}}
\newtheorem{remark}[thm]{Remark}
\title{The Langberg-M\'{e}dard Multiple Unicast Conjecture: Stable $3$-Pair Networks~\thanks{This research is partly supported by a grant from the Research Grants Council of the Hong Kong Special Administrative Region, China (Project No. 17301017) and a grant by the National Natural Science Foundation of China (Project No. 61871343).}}
\author{\small \begin{tabular}{ccc}
Kai Cai & Guangyue Han\\
The University of Hong Kong& The University of Hong Kong\\
email: kcai@hku.hk & email: ghan@hku.hk\\
\end{tabular}}
\date{{\normalsize \today}}
\begin{document} \maketitle

\begin{abstract}
The Langberg-M\'{e}dard multiple unicast conjecture claims that for a strongly reachable $k$-pair network, there exists a multi-flow with rate $(1,1,\dots,1)$. In this paper, we show that the conjecture holds true for {\em stable} $3$-pair networks.
\end{abstract}

\section{Introduction} \label{Introduction}

A {\em $k$-pair network} $\mathcal N=(V, A, S, T)$ consists of a directed acyclic graph (DAG) $D=(V, A)$, a set of $k$ vertices $S=\{s_1, s_2, \dots, s_k\}$ with zero in-degree, called sources (senders) and a set of $k$ vertices $T=\{t_1, t_2, \dots, t_k\}$ with zero out-degree, called sinks (receivers). For convenience only, we assume that any $k$-pair network considered in this paper does not have vertices whose in-degree and out-degree are both equal to $1$. Roughly put, the {\em multiple unicast conjecture}, also known as the {\em Li-Li conjecture}~\cite{Li042}, claims that for any $k$-pair network, if information can be transmitted from all the senders to their corresponding receivers at rate $(r_1, r_2,\dots, r_k)$ via network coding, then it can be transmitted at the same rate via undirected fractional routing. One of the most challenging problems in the theory of network coding~\cite{Yeung06}, this conjecture has been doggedly resisting a series of attacks~\cite{CH15, CH16, CH17, CH18, Harv06, Jain06, Langberg09, Zongpeng12, Xiahou12, Yang14} and is still open to date.

A $k$-pair network $\mathcal{N}$ is said to be {\em fully reachable} if there exists an $s_i$-$r_j$ directed path $P_{s_i, r_j}$ for all $i, j$; and {\em strongly reachable} if, in addition, the paths $P_{s_1, r_j}, P_{s_2, r_j}, \cdots, P_{s_k, r_j}$ are {\em edge-disjoint} for any $j$; and {\em extra strongly reachable} if, furthermore, for any $j$ and all $i \neq k$, $P_{s_i, t_j}$ and $P_{s_k, t_j}$ do not share any vertex other than $t_j$. Throughout the paper, we will reserve the notations $\mathbf{P}_{t_j}$ and $\mathbf{P}$ and define
$$
\mathbf P_{t_j}:=\{P_{s_i, t_j}: i = 1, 2, \dots, k\}, \quad \mathbf{P} := \cup_{j=1}^k \mathbf P_{t_j}.
$$
For notational convenience, we may refer to a path from $\mathbf P_{t_j}$ as a $\mathbf P_{t_j}$-path, or simply a $\mathbf{P}$-path, and moreover, an arc on the path $\mathbf{P}_{t_j}$ as a $\mathbf{P}_{t_j}$-arc. Note that an arc can be simultaneously a $\mathbf{P}_{t_j}$-arc and a $\mathbf{P}_{t_{j'}}$-arc, $j \neq j'$.

The following {\em Langberg-M\'{e}dard multiple unicast conjecture}~\cite{Langberg09}, which deals with strongly reachable $k$-pair networks, is a weaker version of the Li-Li conjecture. Note that for a strongly reachable network, each source is able to multicast at rate 1 to all the receivers, e.g., the classic butterfly network of two-unicast is such a case.
\begin{conj}\label{conj-1}
For any strongly reachable $k$-pair network, there exists a feasible undirected fractional multi-flow with rate $(1,1,\dots,1)$.
\end{conj}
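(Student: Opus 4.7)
The plan is to combine an LP-duality reformulation with an explicit primal flow construction based on the edge-disjoint bundles $\mathbf P_{t_j}$. First, by LP-duality for the maximum concurrent multicommodity flow problem, Conjecture~\ref{conj-1} is equivalent to the metric statement
\[
\sum_{e \in A} \ell_e \;\geq\; \sum_{i=1}^k d_\ell(s_i, t_i) \qquad \text{for every } \ell: A \to \mathbb{R}_{\geq 0},
\]
where $d_\ell$ denotes shortest-path distance in the undirected graph with arc lengths $\ell_e$. This reformulation replaces the search over fractional flows with a search over metrics, and will be used to certify optimality of the candidate flow built below.

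For the primal side, the natural candidate uses the edge-disjoint path families $\mathbf P_{t_j}$ guaranteed by strong reachability. For each ordered pair $(i,j)$ with $i \neq j$, define the \emph{relay walk}
\[
W_{i,j} \;:=\; P_{s_i, t_j} \cdot (P_{s_j, t_j})^{-1} \cdot P_{s_j, t_i},
\]
which, viewed as an undirected walk, connects $s_i$ to $t_i$. Together with the direct paths $P_{s_i, t_i}$, this gives a family of $k^2$ undirected walks from the sources to the sinks. Assigning weight $a$ to each direct path and $b$ to each relay walk, the rate contributed to pair $i$ is $a + (k-1) b$, so flow conservation requires $a + (k-1)b = 1$. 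A counting argument using the edge-disjointness within each $\mathbf P_{t_j}$ shows that the congestion at an arc $e$ grows like $a\,|D(e)| + b\,\bigl((k-1)|D(e)| + 2|J(e) \setminus D(e)|\bigr)$, where $J(e)$ indexes the bundles containing $e$ and $D(e) \subseteq J(e)$ indexes those in which $e$ lies on the diagonal path $P_{s_j, t_j}$.

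The main obstacle is that $|J(e)|$ can be as large as $k$ on heavily shared arcs, so no uniform choice of $(a,b)$ forces the congestion below $1$. To control $|J(e)|$, I would attempt a path-uncrossing preprocessing: whenever two $\mathbf P$-paths meet at an internal vertex, exchange suffixes so as to decrease a suitably chosen potential such as $\sum_e |J(e)|^2$ augmented by the total path length. For $k = 2$ this terminates trivially, and the stability hypothesis used later in the paper for $k = 3$ is essentially the statement that such an uncrossing has reached a fixed point with additional structural rigidity. The principal hurdle is to prove termination, together with tame local structure at the fixed point, for arbitrary $k$; if it can be established that some extremal choice of the $\mathbf P_{t_j}$ satisfies $|J(e)| = O(1)$, then the primal construction with optimized $(a, b)$ and the LP-duality certificate close the argument. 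Failing a clean uncrossing theorem, the most promising fallback is a primal-dual contradiction: start from an extremal length function $\ell$ violating the metric inequality above, and use the $k$ edge-disjoint paths into each sink to build a chain of inequalities on $d_\ell$ that forces $\sum_e \ell_e \geq \sum_i d_\ell(s_i, t_i)$, contradicting the assumed violation.
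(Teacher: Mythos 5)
This statement is Conjecture~\ref{conj-1} itself, which the paper does \emph{not} prove: it is the open Langberg--M\'{e}dard conjecture, and the paper's contribution is only the special case of \emph{stable $3$-pair} networks, obtained by reducing to extra strongly reachable networks, classifying the possible topologies of the subnetworks $\mathcal{N}_{t_i,t_j}$, and exhibiting, for each topology class, an explicit coefficient tuple $\mathcal{C}$ verified against $\mathcal{S}_{\mathcal{N}}$ via Theorem~\ref{thm-comput} and Lemma~\ref{lem-1-2}. Your proposal, read as a proof of the general statement, has a genuine gap that you yourself flag: everything hinges on an uncrossing argument delivering a choice of the bundles $\mathbf{P}_{t_j}$ with $|J(e)|=O(1)$, and no such statement is proved (nor is the fallback primal--dual contradiction carried out). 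Worse, the target is unattainable: an arc can lie on one path from \emph{every} bundle $\mathbf{P}_{t_1},\dots,\mathbf{P}_{t_k}$ --- this is exactly why the paper works with $\mathcal{S}_{\mathcal{N}}\subseteq\mathcal{S}_k$, whose elements can have size $k$ --- and generalized butterfly-type bottlenecks force $|J(e)|=k$ for every admissible choice of paths, so no reselection of the $\mathbf{P}_{t_j}$ can make $|J(e)|$ bounded independently of $k$. Consequently the two-parameter ansatz (weight $a$ on diagonal paths, $b$ on all relay walks) cannot close the argument; indeed this is precisely the construction of Langberg--M\'{e}dard and of the authors' earlier papers, and with uniform coefficients it is only known to achieve rates $1/3$, $8/9$, and $11/12$, not $1$.

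Two further points of comparison. Your LP-duality reformulation (the metric condition $\sum_e \ell_e \geq \sum_i d_\ell(s_i,t_i)$ for all nonnegative length functions, i.e., the Iri/Onaga--Kakusho characterization) is correct but is never actually used to certify anything; it merely restates the problem. And your remark that ``stability is essentially the statement that the uncrossing has reached a fixed point'' is directionally right --- stability is equivalent to the absence of $\mathbf{P}_{t_j}$-semi-cycles (Theorem~\ref{thm-main}), which is a fixed-point condition on path reselection --- but the paper does not use stability to bound congestion uniformly; it uses it to rigidify the topology enough that a finite case analysis with \emph{non-uniform}, topology-dependent coefficient matrices succeeds, and only for $k=3$. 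If you wish to salvage your program, the realistic target is to replace the uniform $(a,b)$ by coefficients $c^{(l)}_{i,j}$ satisfying (\ref{commodity condition}) and to bound $g_s(\mathcal{C})$ arc by arc over $\mathcal{S}_{\mathcal{N}}$, which is exactly the paper's framework; the general conjecture remains open.
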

\noindent It turns out that Conjecture~\ref{conj-1} is equivalent to the following conjecture, with ``strongly reachable'' replaced by ``extra strongly reachable''.
\begin{conj}\label{conj-2}
For any extra strongly reachable $k$-pair network, there exists a feasible undirected fractional multi-flow with rate $(1,1,\dots,1)$.
\end{conj}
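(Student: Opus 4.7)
The plan is to construct an explicit undirected fractional multi-flow of rate $(1,1,\ldots,1)$ by forming weighted combinations of the $k^2$ directed paths in $\mathbf{P}$ together with their reversals, the latter being available because the flow is undirected.

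First, using the extra strong reachability assumption, I would fix for each sink $t_j$ a family $\mathbf{P}_{t_j}=\{P_{s_i,t_j}\}_{i=1}^k$ of pairwise edge-disjoint $s_i$-$t_j$ paths in which the designated path $P_{s_k,t_j}$ meets every other path of $\mathbf{P}_{t_j}$ only at the terminal vertex $t_j$. The extra-reachability condition thus provides a clean ``last segment'' at each sink, shielded from the other paths to that sink, which is the structural handle I plan to use to avoid overloading arcs incident to the sinks.

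Next, for the $j$-th commodity I would send a fraction $\alpha$ of it directly along $P_{s_j,t_j}$, and distribute the remaining $1-\alpha$ among indirect routings of the form ``forward along some $P_{s_j,t_{j'}}$, then backward along some $P_{s_{j''},t_{j'}}$ (using the undirected capacity at each arc), then forward along $P_{s_{j''},t_j}$,'' with coefficients chosen so that permuting the intermediate source at each rendezvous sink produces cancellations on the middle segments. Summing over all $k$ commodities, the load on an arc $a$ becomes a linear expression in $\alpha$ and the mixing weights whose coefficients depend only on the multiplicity profile $(n_1(a),\ldots,n_k(a))$, where $n_j(a)$ counts the $\mathbf{P}_{t_j}$-paths through $a$. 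The capacity constraint ``load $\le 1$'' then reduces to a finite collection of scalar inequalities indexed by the admissible profiles.

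The hard part will be proving that this finite system of inequalities is simultaneously satisfiable by a single uniform choice of $\alpha$ and mixing weights across the whole network, since the number of admissible profiles grows combinatorially with $k$ and a profile attaining the upper bound at one arc may force a suboptimal choice elsewhere. This is presumably why the conjecture remains open in general. For $k=3$ the space of profiles is small, and the forthcoming ``stability'' hypothesis should further kill the problematic profiles, leaving only a handful of cases each dispatched by an explicit construction; that reduction, together with the case analysis and the precise definition of stability, is where I expect the bulk of the technical work to lie.
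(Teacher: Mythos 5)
The statement you were asked to prove is a conjecture: the paper does not prove it for general $k$ but only establishes it for stable $3$-pair networks, so no complete proof exists to compare against. Your plan — write each $f_l$ as a signed linear combination $\sum_{i,j} c^{(l)}_{i,j} f_{i,j}$ of the given $\mathbf{P}$-paths (reversal corresponding to negative coefficients), reduce the rate and capacity constraints to finitely many linear inequalities indexed by the sets of paths sharing an arc, and then settle the stable $k=3$ case by topological case analysis — is exactly the paper's ``linear routing solution'' framework (Theorems~\ref{basic observation} and~\ref{thm-comput} together with the sets $\mathcal{S}_{\mathcal{N}}$), and you correctly locate the open difficulty in the simultaneous satisfiability of those inequalities for general $k$.
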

\noindent To see the equivalence, note that Conjecture~\ref{conj-1} trivially implies Conjecture~\ref{conj-2}, and the reverse direction follows from the fact that a strongly reachable $k$-pair network can be transformed to an extra strongly reachable $k$-pair network with a feasible undirected fractional multi-flow mapped to one with the same rate.

The Langberg-M\'{e}dard multiple unicast conjecture was first proposed in 2009~\cite{Langberg09}. In the same paper, the authors constructed a feasible undirected fractional multi-flow with rate $(1/3, 1/3, \dots, 1/3)$ for a strongly reachable $k$-pair network. Recently, we have improved $1/3$ to $8/9$ for a generic $k$ in~\cite{CH15} and to $11/12$ for $k=3, 4$ in~\cite{CH18}.

A strongly reachable $k$-pair network $\mathcal N$ is said to be {\em stable} if the choice of each $P_{s_i, t_j}$, $i, j=1, 2, \dots, k$, is unique, and {\em unstable} otherwise (see Fig.~\ref{Minimal Networks}); here we remark that $\mathcal{N}$ is stable only if it is extra strongly reachable. In this paper, we will establish Conjecture~\ref{conj-1} for stable $3$-pair networks by establishing Conjecture~\ref{conj-2} for the same family of networks. Our treatment is based on classification of stable $3$-pair networks according to their network topologies. Related work on topological analysis of strongly reachable networks can be found in~\cite{Langberg06} and~\cite{Han09}.

\begin{figure}[htbp]
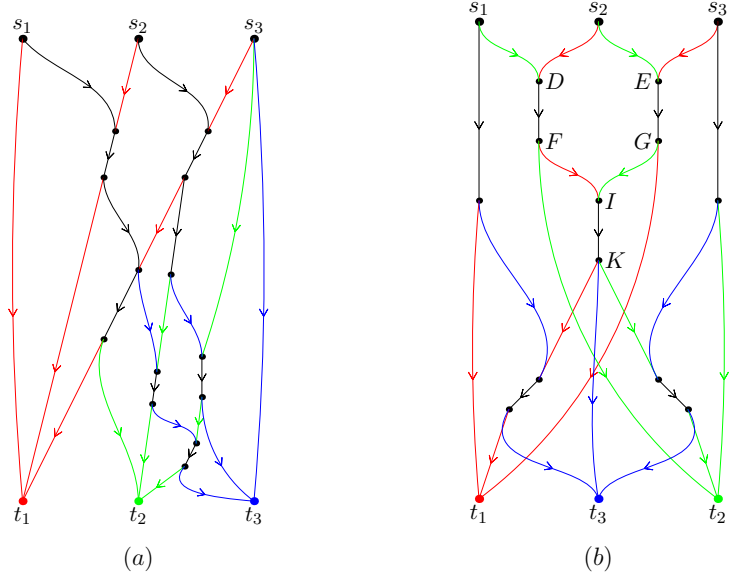

  \centering
  \includegraphics[width=3.5cm]{fig1.1}~~~~~~~~~~~~~~~~~~
  \includegraphics[width=3.5cm]{fig1.2}
  \caption{A stable network $(a)$ and an unstable network $(b)$. In $(b)$, $\mathbf P_{t_2}$ is not unique since $P_{s_2, t_3}$ can be chosen as either $[s_2, D, F, I, K, t_3]$ or $[s_2, E, G, I, K, t_3]$. Here and hereafter, for stable networks, {\bf an arc that belongs to only one $\mathbf P$-path is colored red, green or blue, respectively, depending on the fact that the $\mathbf{P}$-path is a $\mathbf P_{t_1}$-, $\mathbf P_{t_2}$- or $\mathbf{P}_{t_3}$-path; and an arc that belongs to two or more $\mathbf{P}$-paths' is colored black}.}\label{Minimal Networks}
\end{figure}

The rest of paper is organized as follows. In Section 2, we introduce some basic notions, facts and related tools. In Section 3, we characterize stable $k$-pair networks and subsequently show that there exists an efficient algorithm to determine the stability of a given $k$-pair network. In Section4, we investigate the topological structure of stable $3$-pair networks, for which we settle the Langberg-M\'{e}dard conjecture in Section 5. Finally, the paper is concluded in Section 6.

\section{Preliminaries}

Throughout this section, we consider a {\em fully reachable} $k$-pair network $\mathcal{N}$ and adopt all the related notations defined in Section~\ref{Introduction}.

\subsection{Undirected Fractional Multi-Flow}\label{subsection-multi-flow-basics}

For an arc $a=[u,v]\in A$, we call $u$ and $v$ the {\em tail} and the {\em head} of $a$ and denote them by $tail(a)$, $head(a)$, respectively. For any $s, t \in V$, an $s$-$t$ {\em flow}~\footnote{The flow or multi-flow defined for directed graph in this paper, which can be negative, is equivalent to the flow defined in~\cite{Schrijver03} for undirected graphs, which has to be non-negative.} is a function $f: A \rightarrow \mathbb{R}$ satisfying the following {\em flow conservation law:} for any $v \notin \{s, t\}$,
\begin{equation}\label{flow conservation law}
excess_f(v)=0,
\end{equation}
where
\begin{equation}
excess_{f}(v):=\sum_{a\in A: \; head(a)=v} f(a)-\sum_{a\in A: \; tail(a)=v} f(a).
\end{equation}
It is easy to see that $excess_{f}(s)=-excess_{f}(t)$, which is called the {\em value (or rate)} of $f$. We say $f$ is {\it feasible} if $|f(a)| \leq 1$ for all $a\in A$.

An $(s_1, s_2, \dots,s_k)$-$(t_1, t_2, \dots, t_k)$ {\em multi-flow} refers to a set of $k$ flows  $\mathcal{F}=\{f_1, f_2, \dots, f_k\}$, where each $f_{i}$ is an $s_i$-$t_i$ flow. We say $\mathcal{F}$ has {\em rate} $(d_1, d_2, \dots, d_k)$, where $d_i:=excess_{f_i}(s_i)$. For any given $a \in A$, we define $|\mathcal{F}|(a)$ as
\begin{equation}\label{total value in an arc}
|\mathcal{F}|(a):=\sum_{1\leq i\leq k}|f_{i}(a)|.
\end{equation}
And we say $\mathcal{F}$ is {\em feasible} if $|\mathcal{F}|(a) \leq 1$ for all $a \in A$.

\subsection{Routing Solution}

For each $P_{s_i, t_j}$, we define an $s_i$-$t_j$ flow $f_{i,j}$ as follows:
$$
f_{i,j}(a)=\left\{
              \begin{array}{ll}
                1, & \hbox{$a\in P_{s_i, t_j}$,}\\
                0, & \hbox{otherwise.}
              \end{array} \right.
$$
\begin{defn}[Linear Routing Solution]
An $(s_1, s_2, \dots, s_k)$-$(t_1,t_2,\dots, t_k)$ multi-flow $\mathcal{F} = \{f_1, f_2, \dots, f_k\}$ is said to be a {\em routing solution} for $\mathcal{N}$ if it is feasible with rate $(1, 1, \dots, 1)$. A routing solution is called {\em linear} (with respect to $\mathbf{P}$), if, for each feasible $l$,
\begin{equation} \label{coefficient-matrix}
f_{l}=\sum_{i,j=1}^k c^{(l)}_{i,j} f_{i,j},
\end{equation}
where all $c^{(l)}_{i,j} \in \mathbb{R}$, in which case the solution $\mathcal{F}$ can be equivalently represented by its {\em matrix form} $\mathcal C=\left((c^{(1)}_{i,j}), (c^{(2)}_{i,j}), \dots, (c^{(k)}_{i,j})\right)$; otherwise, it is called {\em non-linear}.
\end{defn}

The following theorem~\cite{CH18} is somewhat straightforward.
\begin{thm}\label{basic observation}
An $(s_1, s_2, \dots, s_k)$-$(t_1,t_2,\dots, t_k)$ multi-flow $\mathcal{F} = \{f_1, f_2, \dots, f_k\}$ satisfying (\ref{coefficient-matrix}) has rate $(1,1,\dots,1)$ if and only if all $c^{(l)}_{i,j}$ satisfy
\begin{equation} \label{commodity condition}
\sum_{j=1}^kc^{(l)}_{i,j}=0, \text{ for all}\; i\neq l, \quad \sum_{i=1}^kc^{(l)}_{i,j}=0, \text{ for all}\; j\neq l, \quad \sum_{i=1}^k\sum_{j=1}^kc_{i,j}^{(l)}= 1, \text{ for all}\;l.
\end{equation}
\end{thm}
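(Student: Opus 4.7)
The plan is to reduce the theorem to pure bookkeeping by exploiting the linearity of the excess functional in its flow argument, and then verifying the flow axioms vertex by vertex. The crucial preliminary observation is that each indicator flow $f_{i,j}$ has $excess_{f_{i,j}}(s_i)=-1$, $excess_{f_{i,j}}(t_j)=+1$, and $excess_{f_{i,j}}(v)=0$ at every other vertex $v$. The ``every other'' genuinely includes every other source $s_m$ and every other sink $t_m$: a source has zero in-degree and a sink has zero out-degree, so neither can appear as an internal vertex of any directed path, and the endpoints of $P_{s_i,t_j}$ are precisely $s_i$ and $t_j$.

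Applying linearity to $f_l=\sum_{i,j} c^{(l)}_{i,j} f_{i,j}$ gives
\[
excess_{f_l}(v)\;=\;\sum_{i,j} c^{(l)}_{i,j}\,excess_{f_{i,j}}(v)
\]
for every $v\in V$. At any internal vertex every summand vanishes, so conservation there is automatic; it therefore suffices to evaluate the identity at the $2k$ terminals $s_1,\dots,s_k,t_1,\dots,t_k$. At $s_m$ with $m\ne l$, only the terms with $i=m$ contribute, and $excess_{f_l}(s_m)=0$ collapses to $\sum_j c^{(l)}_{m,j}=0$; symmetrically, conservation at $t_m$ with $m\ne l$ collapses to $\sum_i c^{(l)}_{i,m}=0$. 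These are exactly the first two families of constraints in (\ref{commodity condition}).

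The remaining equation comes from evaluating the same identity at $s_l$ (or, equivalently, at $t_l$): the requirement that the rate $d_l$ equal $1$ turns it into the single scalar relation $\sum_j c^{(l)}_{l,j}=1$ (up to the sign implicit in the definition of rate), which, combined with the already established $\sum_j c^{(l)}_{i,j}=0$ for each $i\ne l$, telescopes to $\sum_{i,j} c^{(l)}_{i,j}=1$---the third constraint. Both directions of the ``if and only if'' follow by reading this chain of equivalences in opposite directions. Since every step is a direct calculation, I do not anticipate any genuine obstacle; the theorem's value lies not in the difficulty of its proof but in the observation that (\ref{commodity condition}) packages the search for a linear routing solution into a clean linear-algebraic system that the subsequent sections can attack through explicit choices of the coefficient matrix $\mathcal{C}$.
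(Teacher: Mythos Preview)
Your argument is correct. The paper does not actually supply a proof of this theorem: it cites \cite{CH18} and calls the result ``somewhat straightforward.'' Your direct computation---linearity of $excess$ applied at each terminal, reducing the flow constraints at $s_m$ and $t_m$ (for $m\ne l$) and the rate condition at $s_l$ to the row-sum, column-sum, and total-sum identities---is precisely the natural argument one would expect, so there is nothing further to compare.
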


\begin{figure}[htbp]
  \centering
  \includegraphics[width=4cm]{fig2.1}~~~~~~~~~~
  \includegraphics[width=4cm]{fig2.2}~~~~~~~~~~
  \caption{A Linear routing solution.}\label{2-pair-linear-solution}
\end{figure}

\begin{exam}\label{exam-2-pair}
Consider the $2$-pair network depicted in Fig.~\ref{2-pair-linear-solution} and Fig.~\ref{2-pair-non-linear-solution}. It is easy to check that Fig.~\ref{2-pair-linear-solution} gives a linear routing solution $\mathcal F=(f_1, f_2)$ with the matrix form
$$
\left(\left(
  \begin{array}{cc}
    \frac{3}{4} & \frac{1}{4} \\
    \frac{1}{4} & \frac{-1}{4} \\
  \end{array}
\right),
\left(
  \begin{array}{cc}
    \frac{-1}{4} & \frac{1}{4} \\
    \frac{1}{4} & \frac{3}{4} \\
  \end{array}
\right)
 \right),
$$
i.e., $f_1=\frac{3}{4}f_{1,1}+\frac{1}{4}f_{1,2}-\frac{1}{4}f_{2,2}+\frac{1}{4}f_{2,1}$ and $f_2=\frac{3}{4}f_{2,2}+\frac{1}{4}f_{2,1}-\frac{1}{4}f_{1,1}+\frac{1}{4}f_{1,2}$. Note that
$$
|\mathcal F|(a)=|f_1(a)|+|f_2(a)|=\left\{
 \begin{array}{ll}
 \frac{1}{2}, & \hbox{$a\in\{[s_1,t_2],[s_2,t_1]\}$;} \\
 1, & \hbox{otherwise.}
\end{array}
\right.
$$
On the other hand, it easy to check that Fig.~\ref{2-pair-non-linear-solution} gives a non-linear routing solution $\mathcal{F}=(f_1, f_2)$ with
$$
|\mathcal F|(a)=\left\{
 \begin{array}{ll}
 0, & \hbox{$a=[s_2,t_1]$;} \\
 1, & \hbox{otherwise.}
\end{array}
\right.
$$
\end{exam}

\begin{figure}[htbp]
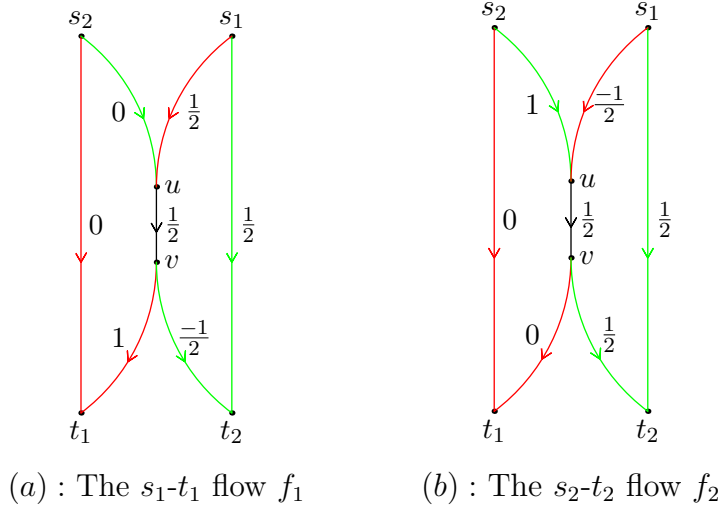

  \centering
  \includegraphics[width=4cm]{fig2.3}~~~~~~~~~~
  \includegraphics[width=4cm]{fig2.4}~~~~~~~~~~
  \caption{A non-linear routing solution.}\label{2-pair-non-linear-solution}
\end{figure}

Using the above language, the  Langberg-M\'{e}dard multiple unicast conjecture says that strongly reachable $k$-pair networks always have routing solutions. Here, we conjecture that it can be further strengthened as follows.
\begin{conj}
Each strongly reachable $k$-pair network has a {\bf linear} routing solution.
\end{conj}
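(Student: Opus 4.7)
The plan is to cast the existence of a linear routing solution as a linear programming feasibility problem over the space of coefficient matrices $\mathcal C = \big((c^{(l)}_{i,j})\big)_{l=1}^{k}$. By Theorem~\ref{basic observation}, the rate $(1,1,\dots,1)$ requirement is the system of linear equalities (\ref{commodity condition}), and the feasibility requirement $|\mathcal F|(a)\le 1$ for every arc $a$ translates, after splitting each absolute value into its two signs, into a system of linear inequalities in the $c^{(l)}_{i,j}$. Crucially, strong reachability implies that for every arc $a$ and every sink index $j$, at most one $i$ satisfies $a\in P_{s_i,t_j}$, so the index set $\mathcal I(a)=\{(i,j):a\in P_{s_i,t_j}\}$ is the graph of a partial function $\sigma_a:J_a\to\{1,\dots,k\}$, and each component flow evaluates to a single linear form $f_l(a)=\sum_{j\in J_a} c^{(l)}_{\sigma_a(j),j}$. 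In particular the feasibility inequality at $a$ depends only on the ``type'' of the pair $(J_a,\sigma_a)$, giving finitely many inequality classes to verify.

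The first attempt I would make is a symmetric ansatz
$$
c^{(l)}_{i,j}=\alpha\,[i=j=l] + \beta\,\bigl[|\{i,j\}\cap\{l\}|=1\bigr] + \gamma\,[i\neq l,\, j\neq l],
$$
which reduces (\ref{commodity condition}) to a one-parameter family; the $2$-pair solution in Figure~\ref{2-pair-linear-solution} is exactly of this form with $(\alpha,\beta,\gamma)=(3/4,1/4,-1/4)$. I would then classify arcs by the isomorphism type of $(J_a,\sigma_a)$, enumerate the finitely many induced inequalities, and try to pick the free parameter (as a function of $k$) so that all of them are satisfied simultaneously. For arcs lying on only a few $\mathbf P$-paths the symmetric ansatz should succeed comfortably; the binding constraints come from arcs shared among many path families, i.e. the ``black'' arcs in the paper's colouring.

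Whenever the symmetric ansatz fails on a particular topology, the plan is to fall back on LP duality. A dual certificate of infeasibility produces non-negative arc weights $w_a$ and signs $\epsilon^{(l)}_a\in\{\pm 1\}$ whose aggregation contradicts a normalization coming from (\ref{commodity condition}). I would then try to translate such a certificate into a violation of the edge-disjointness of the paths $P_{s_1,t_j},\dots,P_{s_k,t_j}$ into some sink $t_j$, thereby contradicting strong reachability. Combined with the equivalence of Conjectures~\ref{conj-1} and~\ref{conj-2}, this reduces the problem to extra strongly reachable networks; within that class, I would handle stable networks first (extending the topological classification used in this paper for $k=3$ to general $k$), then pass to unstable networks by pivoting path choices within $\mathbf P$ without disturbing the linearity of the already-constructed coefficient matrix.

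The main obstacle is the global control of the ``black'' arcs: a single uniform ansatz cannot simultaneously absorb every arc-sharing pattern, and the LP dual for a general strongly reachable network is not obviously trivialized by edge-disjointness alone. I expect the real work to be a refined family of ansätze parametrized by local arc structure, together with an inductive structural reduction in the spirit of the topological decomposition of Section~4. Extending such a decomposition from $k=3$ to arbitrary $k$ appears to be the principal structural hurdle, and plausibly the reason why the strengthened linear-solution statement is presented here as a conjecture rather than a theorem.
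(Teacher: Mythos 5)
The statement you are trying to prove is presented in the paper as an \emph{open conjecture}: the paper itself offers no proof of it, and its actual theorem is only the special case of stable $3$-pair networks, obtained through a lengthy case analysis. Your setup --- encoding an arc $a$ by its index set $\mathcal I(a)$ and observing that feasibility depends only on finitely many ``types'' --- is exactly the paper's framework (the sets $\mathcal S_{\mathcal N}\subseteq\mathcal S_k$ and the quantity $g_s(\mathcal C)$ of Theorem~\ref{thm-comput}). But from that point on your plan has a genuine gap, and in fact its first step provably fails. Take $k=3$ and your symmetric ansatz: the rate conditions (\ref{commodity condition}) force $\beta=-2\gamma$ and $\alpha=1+4\gamma$. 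The type $s=\{(1,1),(2,2),(3,3)\}$ gives $g_s=3|1+6\gamma|\le 1$, hence $\gamma\le -1/9$; the derangement type $s=\{(2,1),(3,2),(1,3)\}$ gives $g_s=9|\gamma|\le 1$, hence $\gamma\ge -1/9$; and at the only surviving value $\gamma=-1/9$ the type $s=\{(1,1),(2,2)\}$ yields $g_s=4/9+4/9+2/9=10/9>1$. All three types are realizable in strongly reachable networks, so no single symmetric matrix handles all of $\mathcal S_3$, let alone $\mathcal S_k$. This is precisely why the paper uses several different, deliberately asymmetric matrices (e.g.\ the one in Lemma~\ref{lem-1-2}, which fails exactly on the derangement types $\alpha(s)=3,\gamma(s)=0$) and then invests all of Section~4 --- Lemmas~\ref{lem-basic-crossing} and~\ref{lem-crossing}, the classification Theorems~\ref{lem-degenerated-config} and~\ref{thm-max-config}, and the forbidden-structure Theorem~\ref{thm-0-3} --- in showing that, for \emph{stable} $3$-pair networks, the bad types simply cannot occur in $\mathcal S_{\mathcal N}$.

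Your fallback step is where the real content would have to live, and it is not an argument but a hope: you assert that an LP infeasibility certificate ``should'' translate into a violation of edge-disjointness, but you give no mechanism for this, and the paper's own evidence points the other way --- edge-disjointness alone does not kill the bad arc types; only the much stronger stability hypothesis (no semi-cycles, Theorem~\ref{thm-main}) does, and even then only after a delicate topological classification that the authors carry out solely for $k=3$. Likewise, your final reduction ``handle stable networks first, then pivot path choices to reach unstable ones'' is exactly the open direction the paper names in its conclusion; pivoting a path changes $\mathcal S_{\mathcal N}$ and there is no reason a matrix feasible for one choice of $\mathbf P$ remains feasible for another. In short: the proposal reproduces the paper's reduction to arc types but does not supply the structural results needed to make any fixed or adaptively chosen matrix feasible, and the statement remains unproven both in the paper and in your attempt.
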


\subsection{$\mathcal{S}_{\mathcal{N}}$ and $g_s(\mathcal{C})$}

Let $[k] = \{1, 2, \dots, k\}$ and define
$$
\mathcal{S}_{\mathcal{N}} = \{ \{(i, j) \in [k] \times [k]: {P}_{s_i, t_j} \mbox{ passes through } a\}: a \in A \};
$$
in other words, each element of $\mathcal{S}_{\mathcal{N}}$ is a set of index pairs corresponding to all $\mathbf{P}$-paths that pass through a given arc in $\mathcal{N}$. Note that, if $\mathcal N$ is a strongly reachable network, then for any feasible $j$, each arc is passed by at most one of the paths $P_{s_1, t_j}, P_{s_2, t_j}, \dots, P_{s_k, t_j}$, and hence $\mathcal S_{\mathcal N}\subseteq\mathcal S_k$, where
$$
\mathcal{S}_k := \{\{(i_1,j_1),\dots (i_r,j_r)\}\subseteq [k]\times[k] \| \, j_1 < j_2 < \dots < j_r, 1 \leq r \leq k\}.
$$

Now, for a tuple of $k\times k$ matrices $\mathcal C=((c^{(1)}_{i,j}), (c^{(2)}_{i,j}), \dots, (c^{(k)}_{i,j}))$ satisfying (\ref{commodity condition}), given $s \in \mathcal{S}_{\mathcal{N}}$ and $l \in [k]$, we define
$$
g^{(l)}_{s}(\mathcal C):=\underset{(i,j)\in s}{\sum}c^{(l)}_{i,j},
$$
and furthermore,
\begin{equation} \label{g-smalls}
g_{s}(\mathcal C):=\sum_{l=1}^k|g^{(l)}_{s}(\mathcal C)|.
\end{equation}
The following theorem, whose proof is straightforward and thus omitted, will be used as a key tool to establish our results.
\begin{thm}\label{thm-comput}
$\mathcal C$ is a linear routing solution of $\mathcal N$ if $g_{s}(\mathcal C)\leq 1$ for any $s\in \mathcal S_{\mathcal N}$.
\end{thm}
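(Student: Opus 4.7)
The plan is to reduce the per-arc feasibility inequality $|\mathcal{F}|(a) \le 1$ to the per-index-set inequality $g_{s_a}(\mathcal{C}) \le 1$, where $s_a \in \mathcal{S}_{\mathcal{N}}$ is the ``signature'' of the arc $a$, and then invoke Theorem~\ref{basic observation} to upgrade feasibility into a routing solution of rate $(1,1,\dots,1)$.

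First, I would fix an arbitrary arc $a \in A$ and set $s_a := \{(i,j) \in [k] \times [k] : a \in P_{s_i, t_j}\}$, which by definition lies in $\mathcal{S}_{\mathcal{N}}$. Since each $f_{i,j}$ is the $\{0,1\}$-indicator of the path $P_{s_i, t_j}$, expanding the defining linear combination yields
$$
f_l(a) \;=\; \sum_{i,j=1}^{k} c^{(l)}_{i,j}\, f_{i,j}(a) \;=\; \sum_{(i,j) \in s_a} c^{(l)}_{i,j} \;=\; g^{(l)}_{s_a}(\mathcal{C}).
$$
Summing absolute values over $l$ then gives $|\mathcal{F}|(a) = \sum_{l=1}^k |g^{(l)}_{s_a}(\mathcal{C})| = g_{s_a}(\mathcal{C})$, so the hypothesis ``$g_s(\mathcal{C}) \le 1$ for every $s \in \mathcal{S}_{\mathcal{N}}$'' is exactly the statement ``$|\mathcal{F}|(a) \le 1$ for every $a \in A$'', i.e., $\mathcal{F}$ is feasible.

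Next, I would verify that $\mathcal{F} = (f_1, \dots, f_k)$ actually has rate $(1,1,\dots,1)$. Each $f_{i,j}$ is an $s_i$-$t_j$ flow of unit value, so any $\mathbb{R}$-linear combination of these indicator flows automatically respects the flow conservation law at every vertex that is neither a source nor a sink. At the sources and sinks, the commodity conditions~(\ref{commodity condition}) --- which are built into the very definition of the tuple $\mathcal{C}$ in the statement --- are exactly the hypotheses of Theorem~\ref{basic observation}, which then delivers the correct rate vector at each $s_l$ and $t_l$. Combining this with the feasibility established above completes the proof.

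The only conceptual point to watch is that for a fixed arc $a$, the index set $s_a$ records \emph{all} pairs $(i,j)$ for which $P_{s_i, t_j}$ traverses $a$, not just those with a common sink or a common source; once this is kept in mind, the identity $|\mathcal{F}|(a) = g_{s_a}(\mathcal{C})$ is immediate from linearity and the triangle-inequality-free definition of $g_s(\mathcal{C})$ as a sum of absolute values. In short, there is no genuine obstacle: the theorem is essentially a notational repackaging that replaces the $|A|$ arc-by-arc feasibility inequalities with the (typically much smaller) family indexed by $\mathcal{S}_{\mathcal{N}}$, which is precisely what makes it a convenient computational tool in the sequel.
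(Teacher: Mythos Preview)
Your proof is correct and is exactly the straightforward verification the paper has in mind; indeed, the paper omits the proof entirely, stating only that it ``is straightforward and thus omitted.'' Your unpacking of $f_l(a)=g^{(l)}_{s_a}(\mathcal C)$ and the appeal to Theorem~\ref{basic observation} (whose hypotheses are guaranteed because $\mathcal C$ is, by the standing convention preceding the theorem, assumed to satisfy~(\ref{commodity condition})) is precisely the intended argument.
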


For $s=\{(i_1, j_1), (i_2, j_2), \dots, (i_{\alpha(s)}, j_{\alpha(s)})\}\in\mathcal S_{\mathcal N}$,
we define the following multi-set:
$$
Ind_s:=\{i_1, j_1, i_2, j_2, \dots, i_{\alpha(s)}, j_{\alpha(s)}\},
$$
where $\alpha(s)$ denotes the size of $s$. And for any $l=1, 2, \dots, k$, denote by $m_{Ind_s}(l)$ the multiplicity of $l$ in $Ind_s$ (if $l\notin Ind_s$, then $m_{Ind_s}(l)=0$). An element $(i,j)\in s$ is said to be {\em diagonal} if $i=j$, otherwise {\em non-diagonal}. We use $\gamma(s)$ to denote the number of diagonal elements in $s$. For a quick example, consider $s=\{(1,1),(2,2),(1,3),(3,4),(1,6)\}\subseteq[6]\times[6]$. Then, $Ind_s=\{1,1,2,2,1,3,3,4,1,6\}$, $m_{Ind_s}(1)=4$, $m_{Ind_s}(2)=m_{Ind_s}(3)=2$, $m_{Ind_s}(4)=m_{Ind_s}(6)=1$, $m_{Ind_s}(5)=0$, $\alpha(s)=5$ and $\gamma(s)=2$.

\section{Characterization of Stable Networks} \label{section-solutions}

In this section, unless specified otherwise, we assume that $\mathcal{N}$ is an {\em extra strongly reachable} $k$-pair network.

\begin{defn}[Residual Network \cite{Langberg06}]\label{def-residual network}
For $j=1, 2, \dots, k$, the $j$-th {\em residual network} $\mathcal N_j$ is formed from $\mathcal N$ by reversing the directions of all its $\mathbf{P}_{t_j}$-arcs (that may be simultaneously $\mathbf{P}_{t_{j'}}$-arcs for some $j' \neq j$).
\end{defn}
Note that in spite of the acyclicity of $\mathcal N$, there may exist directed cycles in $\mathcal N_j$, and such a directed cycle must contain at least one reversed $\mathbf P_{t_j}$-arc.
\begin{defn}[Regular Cycle]
A directed cycle $C$ of $\mathcal N_j$ is called {\em regular}, if $C$ has no isolated vertex of $\mathbf P_{t_j}$, otherwise it is called {\em singular}.
\end{defn}

\begin{defn}[Semi-Cycle \cite{Han09}]\label{def-semi-cycle}
A $\mathbf P_{t_j}$-{\it semi-cycle} of $\mathcal N$ is formed from a regular cycle of $\mathcal N_j$ by reversing the directions of all its $\mathbf P_{t_j}$-arcs.
\end{defn}

Obviously, there is a one-to-one correspondence from the set of all the $\mathbf P_{t_j}$-semi-cycles in $\mathcal N$ to the set of all the regular cycles of the $j$-th residual network $\mathcal N_j$.

\begin{figure}[htbp]
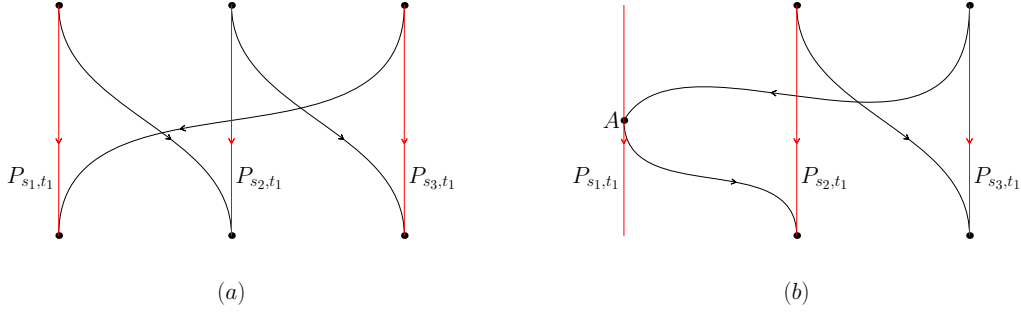

  \centering
  \includegraphics[width=6cm]{fig4.11}~~~~~~~~~~
  \includegraphics[width=6cm]{fig4.12}~~~~~~~~~~
  \caption{If we reverse the directions of all the $\mathbf P_{t_1}$-paths, then both $(a)$ and $(b)$ will give rise to directed cycles of $\mathcal N_1$. Note that the one from $(a)$ is regular, whereas the one from $(b)$ is singular since it contains an isolated vertex $A$ on the path $P_{s_1, t_1}$. And by definition, (a) is a $\mathbf{P}_{t_1}$-semi-cycle.}\label{fig-reg-cycle}
\end{figure}

\begin{defn}[Crossing]\label{def-crossing}
A $\mathbf P_{t_j}$-{\it crossing} of $\mathcal N$ is formed from a $\mathbf P_{t_j}$-semi-cycle of $\mathcal N$ by removing all the $\mathbf P_{t_j}$-arcs.
\end{defn}

For example, consider the network $\mathcal N$ depicted in $(b)$ of Fig.~\ref{Minimal Networks}. While the choices of $P_{s_1, t_3}$ and $P_{s_3, t_3}$ are both unique, there are two choices for $P_{s_2, t_3}$: $P^{(1)}_{s_2, t_3}=[s_2, D, F, I, K, t_3]$ and $P^{(2)}_{s_2, t_3}=[s_2, E, G, I, K, t_3]$, which give rise to two choices of $\mathbf{P}_{t_3}$: $\mathbf P^{(1)}_{t_3}=\{P_{s_1, t_3}, P^{(1)}_{s_2, t_3}, P_{s_3, t_3}\}$ and $\mathbf P^{(2)}_{t_3}=\{P_{s_1, t_3}, P^{(2)}_{s_2, t_3}, P_{s_3, t_3}\}$. By definition, $\{[s_2, D, F, I], [s_2, E, G, I]\}$ is a $\mathbf P^{(1)}_{t_3}$-semi-cycle and also a $\mathbf P^{(2)}_{t_3}$-semi-cycle of $\mathcal N$; $[s_2, E, G, I]$ is a $\mathbf P^1_{t_3}$-crossing and $[s_2, D, F, I]$ is a $\mathbf P^{(2)}_{t_3}$-crossing. If we reverse the direction of $\mathbf P^{(1)}_{t_3}$, then $[s_2, E, G, I, F, D, s_2]$ is a cycle in the corresponding residual network, and if we choose to reverse the direction of $\mathbf P^{(2)}_{t_3}$, then $[s_2, D, F, I, G, E, s_2]$ is a cycle in the corresponding residual network.

We are now ready to state the following theorem, which give characterizations of stable networks.
\begin{thm}\label{thm-main}
For any extra strongly reachable $k$-pair network $\mathcal N$, the following statements are all equivalent.
\begin{description}
  \item[$1)$] $\mathcal N$ is stable.
  \item[$2)$] $\mathcal N$ has no $\mathbf P_{t_j}$-semi-cycle, $j=1, 2, \dots, k$.
  \item[$3)$] $\mathcal N$ has no $\mathbf P_{t_j}$-crossing, $j=1, 2, \dots, k$.
  \item[$4)$] None of $\mathcal N_1, \mathcal N_2, \dots, \mathcal N_k$ has a regular directed cycle.
\end{description}
\end{thm}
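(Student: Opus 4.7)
The plan is to verify $(2) \iff (3) \iff (4)$ directly from the definitions, and then establish $(1) \iff (2)$ as the main content. The equivalence of $(2)$ and $(4)$ is essentially the one-to-one correspondence noted immediately after Definition~\ref{def-semi-cycle}, which associates a regular directed cycle in $\mathcal{N}_j$ with the $\mathbf{P}_{t_j}$-semi-cycle obtained by un-reversing its $\mathbf{P}_{t_j}$-arcs. For $(2) \iff (3)$, a $\mathbf{P}_{t_j}$-crossing is by Definition~\ref{def-crossing} a semi-cycle with its $\mathbf{P}_{t_j}$-arcs deleted; conversely, the endpoints of a crossing's arcs must lie on $\mathbf{P}_{t_j}$-paths (otherwise, closing them into a directed cycle in $\mathcal{N}_j$ by reinserting arcs would be impossible), and the $\mathbf{P}_{t_j}$-arcs needed to form the semi-cycle are uniquely determined by these endpoints together with the regularity requirement.

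For the main content $(1) \iff (2)$, in the direction $\neg 1) \Rightarrow \neg 2)$ I would start with two distinct valid choices $P^{(1)}$ and $P^{(2)}$ for some $P_{s_i, t_j}$, fix $\mathbf{P}_{t_j}$ to use $P^{(1)}$, and observe that every arc in $P^{(2)} \setminus P^{(1)}$ is a non-$\mathbf{P}_{t_j}$-arc, since $P^{(2)}$ must preserve the edge-disjointness of $\mathbf{P}_{t_j}$ when substituted for $P^{(1)}$. Picking the first divergence point $u$ and the next rejoining point $v$ along the two paths, the $P^{(1)}$-segment from $u$ to $v$ (all $\mathbf{P}_{t_j}$-arcs) together with the corresponding $P^{(2)}$-segment (all non-$\mathbf{P}_{t_j}$-arcs) becomes, after reversing the $\mathbf{P}_{t_j}$-arcs, a directed cycle in $\mathcal{N}_j$, which produces the desired semi-cycle provided it is regular. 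For $\neg 2) \Rightarrow \neg 1)$, any $\mathbf{P}_{t_j}$-semi-cycle viewed in $\mathcal{N}$ decomposes into alternating maximal $\mathbf{P}_{t_j}$-segments $\sigma_1, \ldots, \sigma_r$ and maximal non-$\mathbf{P}_{t_j}$-segments $\tau_1, \ldots, \tau_r$, with each $\sigma_i$ and $\tau_i$ sharing endpoints; by regularity each $\sigma_i$ is a genuine subpath of some $P_{s_l, t_j}$, so replacing $\sigma_i$ with $\tau_i$ inside that path yields a distinct valid $s_l$-$t_j$ path, witnessing instability.

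The step I expect to be delicate is verifying regularity in the forward direction $\neg 1) \Rightarrow \neg 2)$. Since extra strong reachability only enforces vertex-disjointness between $P_{s_k, t_j}$ and the other $\mathbf{P}_{t_j}$-paths, an interior vertex of the $P^{(2)}$-deviation could in principle sit on some $P_{s_l, t_j}$ with $l \neq i, k$, making that vertex an isolated $\mathbf{P}_{t_j}$-vertex of the candidate cycle and rendering the cycle singular. I expect to deal with this either by choosing $P^{(2)}$ to be a minimal alternative to $P^{(1)}$ (for instance, one minimizing the number of such intersections along the deviation), or by splitting a singular candidate cycle at each isolated $\mathbf{P}_{t_j}$-vertex and extracting a smaller regular sub-cycle, with termination guaranteed by the acyclicity of $\mathcal{N}$.
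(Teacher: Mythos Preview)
There is a genuine gap in your $\neg 1)\Rightarrow\neg 2)$ argument. Instability of $\mathcal N$ means that for some $j$ there exist two distinct edge-disjoint \emph{collections} $\mathbf P^{(1)}_{t_j}$ and $\mathbf P^{(2)}_{t_j}$; it does \emph{not} guarantee that a single path $P_{s_i,t_j}$ can be changed while the remaining $k-1$ paths stay fixed. Your key sentence ``$P^{(2)}$ must preserve the edge-disjointness of $\mathbf P_{t_j}$ when substituted for $P^{(1)}$'' presupposes precisely this, namely that $\{P^{(2)}\}\cup\{P^{(1)}_{s_l,t_j}:l\neq i\}$ is edge-disjoint. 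But $P^{(2)}=P^{(2)}_{s_i,t_j}$ is only known to be edge-disjoint from the \emph{alternative} paths $P^{(2)}_{s_l,t_j}$, which may themselves differ from the $P^{(1)}_{s_l,t_j}$. Hence an arc of $P^{(2)}\setminus P^{(1)}$ can lie on some $P^{(1)}_{s_l,t_j}$ with $l\neq i$; that arc is then reversed in $\mathcal N_j$, and your first-divergence/rejoining construction fails to produce a directed cycle there at all. (A two-path ``swap'' already exhibits this: if $\mathbf P^{(1)}_{t_j}$ and $\mathbf P^{(2)}_{t_j}$ differ by exchanging segments between $P_{s_1,t_j}$ and $P_{s_2,t_j}$, neither single substitution is edge-disjoint.)

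The paper circumvents this by working with both full collections simultaneously: it forms the subnetwork on $\mathbf P^{(1)}_{t_j}\cup\mathbf P^{(2)}_{t_j}$, discards the arcs common to both, and reverses those belonging only to $\mathbf P^{(1)}_{t_j}$. A degree count shows every vertex of the resulting digraph has equal in- and out-degree, so it is Eulerian and decomposes into arc-disjoint directed cycles of $\mathcal N_j$, from which a semi-cycle is read off. This absorbs arbitrarily many simultaneous path changes at once. A smaller point on the other direction: your single-segment swap $\sigma_i\to\tau_i$ is not well-posed, since in the semi-cycle $\tau_i$ shares only one endpoint with $\sigma_i$ (the other endpoint of $\sigma_i$ meets $\tau_{i+1}$); the correct move, and the one the paper makes, is the global replacement $(\mathbf P_{t_j}\setminus C)\cup C'$ with $C'$ the full crossing. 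Your concern about regularity is legitimate and the paper is terse there as well; the Eulerian picture offers a clean fix (trace the cycle so as to follow a reversed $\mathbf P^{(1)}_{t_j}$-arc whenever one is available at the current vertex), and your proposed splitting argument would also work once applied to the right object.
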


\begin{proof}
We will only establish the equivalence between $1)$ and $2)$, which is the only non-trivial part of the proof.

$ 2) \to 1)$: Suppose $\mathcal N$ is unstable. Then, for some $j$, there exist two choices for $\mathbf{P}_{t_j}$: $\mathbf P^{(1)}_{t_j}=\{P^{(1)}_{s_1, t_j}, P^{(1)}_{s_2, t_j}, \dots, P^{(1)}_{s_k, t_j}\}$ and $\mathbf P^{(2)}_{t_j}=\{P^{(2)}_{s_1, t_j}, P^{(2)}_{s_2, t_j}, \dots, P^{(2)}_{s_k, t_j}\}$.  Let $C_1$ be the subnetwork of $\mathcal N$ induced on $\mathbf P^{(1)}_{t_j}\cup \mathbf P^{(2)}_{t_j}$ after removing all the vertices whose in-degree and out-degree are both $1$. Then, for each arc $a$ of $C_1$, there are three cases: $(1)$ $a$ only belongs to a path of $\mathbf P^{(1)}_{t_j}$; $(2)$ $a$ only belongs to a path of $\mathbf P^{(2)}_{t_j}$; $(3)$ $a$ belongs to both a path of $\mathbf P^{(1)}_{t_j}$ and a path of $\mathbf P^{(1)}_{t_j}$. Let $C$ be the digraph induced on the arcs of Cases $(1)$ and $(2)$ after reversing the directions of the arcs of Case $(1)$. It is easy to see that for each vertex $v$ of $C$, the in-degree of $v$ equals the out-degree of $v$. Thus, $C$ is an Eulerian directed graph and hence composed of arc-disjoint directed cycles, which corresponds to a $\mathbf P^{(1)}_{t_j}$-semi-cycle by definition.

$ 1) \to 2)$: Suppose that there exists a $\mathbf P_{t_j}$-semi-cycle $C$, and let $C'$ be the corresponding $\mathbf P_{t_j}$-crossing. Then it is easy to see that $(\mathbf P_{t_j}\setminus C)\cup C'$ is an alternative choice of $\mathbf{P}_{t_j}$, which means that $\mathcal N$ is not stable.
\end{proof}

We would like to add that one can efficiently check that if a given $k$-pair network $\mathcal{N}$ is extra strongly reachable by applying the Ford-Fulkerson algorithm to a set of $k$ directed graphs $D_i$, $1\leq i\leq k$, constructed below:
\begin{itemize}
  \item Add a vertex $s$ as the source node and add an arc $[s, s_j]$ for each $s_j\in S$, $1\leq j\leq k$.
  \item Split each vertex $v\in V\setminus \{s, t_i\}$ into two vertices $v_{in}$ and $v_{out}$ and add an arc $[v_{in}, v_{out}]$. Accordingly, replace arcs $[s, u]$, $[u, v]$ and $[v, t_i]$ by $[s, u_{in}]$, $[u_{out}, v_{in}]$ and $[v_{out}, t_i]$, respectively.
\end{itemize}
It is easy to see that the maximal flow from $s$ to $t_i$ is the number of vertex-disjoint $\mathbf P_{t_i}$-paths, and moreover, if it equals $k$ for all $D_i$, then $\mathcal N$ is extra strongly reachable. Furthermore, it is widely known \cite{CLRS09} that the depth-first search (DFS) algorithm can be used to detect directed cycles in a directed graph, which can be slightly modified \footnote{To see this, whenever the DFS visits a vertex on a $\mathbf{P}_{t_j}$-path from a vertex outside of the $\mathbf{P}_{t_j}$-path, it goes along the $\mathbf{P}_{t_j}$-arc for the next step's visit.} to detect {\bf regular cycles} in a residual network $\mathcal N_j$. To sum up, the equivalence between $1)$ and $4)$ of Theorem~\ref{thm-main}, together with the Ford-Fulkerson algorithm and the DFS algorithm, can be used to efficiently check the stability of a $k$-pair network.

\section{Stable $3$-pair Networks}

In this section, unless specified otherwise, we assume that $\mathcal{N}$ is a stable $3$-pair network. For the sake of convenience, a $\mathbf P_{t_1}$-path, $\mathbf P_{t_2}$-path or $\mathbf P_{t_3}$-path may be referred to as a red path, green path or blue path, respectively. For each feasible $i$, we may use shorthand notations $r_i$, $g_i$ and $b_i$ for paths $P_{s_i, t_1}$, $P_{s_i, t_2}$ and $P_{s_i, t_3}$, respectively. Similarly, we call a $\mathbf P_{t_1}$-crossing, $\mathbf P_{t_2}$-crossing and $\mathbf P_{t_3}$-crossing as a $r$-crossing, $g$-crossing and $b$-crossing, respectively.



\begin{defn}[Longest Common Segment (l.c.s.)]
For any directed paths $p_1, p_2, \dots, p_r$ in $\mathcal{N}$, a longest common segment of $\{p_1, p_2, \dots, p_r\}$, henceforth abbreviated as a $\{p_1, p_2, \dots, p_r\}$-l.c.s., is a segment common to all $p_i$ and any segment properly containing it is not common to all $p_i$.
\end{defn}

\begin{figure}[htbp]
  \centering
  \includegraphics[width=2.5cm]{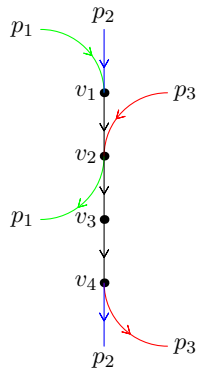}~~~~~~~~~~
  \caption{Longest common segments by paths $p_1$, $p_2$ and $p_3$}\label{merging}
\end{figure}

For example, In Fig.~\ref{merging}, there are three paths $p_1, p_2, p_3$ represented using distinct colors. It is easy to see that $[v_1,v_2]$ is a $\{p_1,p_2\}$-l.c.s.; $[v_2,v_3,v_4]$ is a $\{p_2,p_3\}$-l.c.s.; vertex $v_2$ is a $\{p_1,p_2,p_3\}$-l.c.s. and also a $\{p_1,p_3\}$-l.c.s. On the other hand, $[v_2,v_3]$ is not a $\{p_2,p_3\}$-l.c.s. since $[v_2, v_3, v_4]$, which properly contains $[v_2,v_3]$, is common to both $p_2$ and $p_3$.

\subsection{$\mathcal{N}_{t_i, t_j}$}

For $i \neq j$, we will use $\mathcal{N}_{t_i, t_j}$ to denote the subnetwork of $\mathcal{N}$ induced on all $\mathbf{P}_{t_i}$-paths and $\mathbf{P}_{t_j}$-paths. In this section, we will characterize the topology of $\mathcal{N}_{t_i, t_j}$, and without loss of generality, we will only consider $\mathcal{N}_{t_1, t_2}$. For any path $p\in\mathbf P_{t_1}\cup\mathbf P_{t_2}=\{r_1, r_2, r_3, g_1, g_2, g_3\}$, let $\ell(p)$ denote the number of $\{r_i, g_j\}$-l.c.s.'s ($1\leq i,j\leq 3$) on $p$ and we order all such l.c.s.'s by $p(1) < p(2) < \dots < p(\ell(p))$, where by $p(i)<p(i+1)$, we mean $head(p(i))<tail(p(i+1))$ according to the topological order of the vertices/arcs of a DAG; and we will use $p(i, i+1)$ to denote the path segment of $p$ form $head(p(i))$ to $tail(p(i+1))$. Note that $r_j(1)=g_j(1)$ since $r_j$ and $g_j$ share the same source $s_j$ for all feasible $j$. We first give a simple yet very useful lemma.
\begin{lem}\label{lem-basic-crossing}
For $p, q \in \{r_1, r_2, r_3, g_1, g_2, g_3\}$ and $l = 1, 2, \dots, \ell(p)-1$, if $p(l) \subseteq q$, then $p(l+1) \not \subseteq q$.
\end{lem}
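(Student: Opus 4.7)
The plan is to argue by contradiction via the uniqueness of each $P_{s_i,t_j}$ guaranteed by stability. Assume both $p(l)\subseteq q$ and $p(l+1)\subseteq q$, with $q\neq p$ (the case $q=p$ is vacuous, since $p(l+1)\subseteq p$ automatically). Set $u:=head(p(l))$ and $v:=tail(p(l+1))$; these vertices lie on both $p$ and $q$, with $u$ preceding $v$ in the topological order of the underlying DAG.

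The central construction is an alternative candidate $\widetilde p$ for the $\mathbf P$-path $p$: first traverse $p$ from its source to $u$, then the $u$-to-$v$ subpath $\sigma_q$ of $q$, and finally $p$ from $v$ to its sink. Because these three pieces occupy topologically disjoint intervals of vertices, no vertex is repeated and $\widetilde p$ is a bona fide directed path with the same source and sink as $p$. If $\widetilde p\neq p$, then $p=P_{s_i,t_j}$ admits two distinct realizations, contradicting stability. So the only remaining case is $\widetilde p=p$, which forces $\sigma_q=p(l,l+1)$ arc by arc; combined with $p(l),p(l+1)\subseteq q$, the entire segment of $p$ from $tail(p(l))$ to $head(p(l+1))$ is then shared with $q$, yielding a $\{p,q\}$-common segment of $p$ strictly larger than $p(l)$.

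It remains to contradict this enlargement. If $p(l)$ contains at least one arc, then the edge-disjointness of same-color $\mathbf P$-paths (built into strong reachability) forces $q$ to coincide with the very same-color path of $\{r_{i_l},g_{j_l}\}$ that defines $p(l)$ as an l.c.s., and the enlargement then contradicts the maximality of $p(l)$. If $p(l)$ is a single vertex, I would instead maximize the new $\{p,q\}$-common segment outward to a genuine $\{r_i,g_j\}$-l.c.s.\ on $p$, and observe that, by its range on $p$ (containing both $p(l)$ and $p(l+1)$), it cannot be assigned a position in the strict ordering $p(1)<\cdots<p(\ell(p))$ compatible with $head(p(i))<tail(p(i+1))$ while remaining distinct from both $p(l)$ and $p(l+1)$, contradicting the enumeration.

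The principal obstacle is precisely this single-vertex subcase of the $\widetilde p=p$ branch, where the clean edge-disjointness argument fails and one must instead track the positions of multiple l.c.s.'s in the list via the strict inequality $head(p(i))<tail(p(i+1))$. The rest is straightforward: the alternative-path construction handles the generic scenario directly from stability, and the positive-length $p(l)$ subcase closes immediately by l.c.s.\ maximality.
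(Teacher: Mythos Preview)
Your argument has a real gap in the main branch. Stability, as the paper defines it (and as used in the proof of Theorem~\ref{thm-main}), means that the edge-disjoint system $\mathbf P_{t_j}$ is unique---not merely that each individual $s_i$-$t_j$ directed path is unique. So producing $\widetilde p\neq p$ is not enough: you must exhibit an alternative \emph{edge-disjoint} family $\mathbf P_{t_j}$. But your $\widetilde p$ contains the segment $\sigma_q$ of $q$, and nothing prevents $\sigma_q$ from sharing arcs with another path of $p$'s color. Concretely, if $p=g_m$ and $q=r_n$, then $\sigma_q\subseteq r_n$ may contain an $\{r_n,g_{m'}\}$-l.c.s.\ for some $m'\neq m$, in which case $\widetilde p$ and $g_{m'}$ are not edge-disjoint and you have no alternative $\mathbf P_{t_2}$.

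The paper's two-line proof avoids this by working on the other side: it observes that $p(l,l+1)$ is a crossing of $q$'s color (an $r$-crossing if $q$ is red) and invokes Theorem~\ref{thm-main}. The reason this succeeds is exactly the asymmetry you missed: because $p(l)$ and $p(l+1)$ are \emph{consecutive} entries in the list of all $\{r_i,g_j\}$-l.c.s.'s on $p$, the gap $p(l,l+1)$ contains no arc of any path of $q$'s color. Hence rerouting $q$ through $p(l,l+1)$---i.e., setting $\widetilde q$ to be $q$ up to $u$, then $p(l,l+1)$, then $q$ from $v$ on---does yield a path edge-disjoint from the other $q$-color paths, giving a genuine second choice of $\mathbf P_{t_1}$. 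With this correct construction, your $\widetilde p=p$ branch and the single-vertex subcase evaporate: since $head(p(l))<tail(p(l+1))$, the segment $p(l,l+1)$ has at least one arc, none of which lie on $q$, so $\widetilde q\neq q$ automatically. (Incidentally, the case $q=p$ is not ``vacuous'' but rather makes the implication false; the lemma is only meaningful, and only ever applied, with $p$ and $q$ of opposite colors.)
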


\begin{proof}
Without loss of generality, we suppose $q\in\{r_1,r_2,r_3\}$. Clearly, if $p(l), p(l+1)\subseteq q$, then $p(l, l+1)$ forms a $r$-crossing, which contradict the assumption that the network is stable.
\end{proof}

The following lemma is a key tool in this paper.
\begin{lem}\label{lem-crossing}
There exist $1\leq i\neq j\leq 3$ such that $\ell(r_i) = \ell(g_j) =1$.
\end{lem}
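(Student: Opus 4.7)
The plan is to argue by contradiction, converting the failure of the lemma into a $\mathbf{P}_{t_2}$-semi-cycle of $\mathcal{N}$ and thereby violating stability through the equivalence of $1)$ and $2)$ in Theorem~\ref{thm-main}. I would first set $R_1 := \{i : \ell(r_i) = 1\}$ and $G_1 := \{j : \ell(g_j) = 1\}$; since $r_j(1) = g_j(1)$ is always a $\{r_j, g_j\}$-l.c.s., one has $\ell(r_i), \ell(g_j) \geq 1$, so the failure of the lemma is equivalent to $R_1 \times G_1 \subseteq \{(i,i) : i \in \{1,2,3\}\}$. This cleanly splits into three cases: $(A)$ $R_1 = \emptyset$ (all $\ell(r_i) \geq 2$); $(B)$ $G_1 = \emptyset$ (all $\ell(g_j) \geq 2$); or $(C)$ $R_1 = G_1 = \{k\}$ for a single index $k$, with all other $\ell(r_i), \ell(g_j) \geq 2$.

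Two structural tools would drive the argument. First, whenever $\ell(r_i) \geq 2$, $r_i(2) \subseteq g_{\sigma(i)}$ for some $\sigma(i) \neq i$: this is Lemma~\ref{lem-basic-crossing} applied to $r_i(1) \subseteq g_i$. Second, between consecutive l.c.s.'s $r_i(l)$ and $r_i(l+1)$, no interior vertex of $r_i$ can lie on any green path; otherwise such a vertex would form (or extend into) an intermediate $\{r_i, g_{j'}\}$-l.c.s., contradicting the consecutivity of $r_i(l)$ and $r_i(l+1)$. Symmetric statements hold for $g_j$.

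In Case $(A)$, the map $\sigma : \{1,2,3\} \to \{1,2,3\}$ with $\sigma(i) \neq i$ is a functional graph on three nodes, so it must contain a directed cycle $i_1 \to i_2 \to \cdots \to i_m \to i_1$ of length $m \in \{2, 3\}$. I would construct a regular directed cycle of $\mathcal{N}_2$ by cyclically concatenating, for $l = 1, \dots, m$, the reversed $g_{i_{l+1}}$-subpath from $\mathrm{tail}(r_{i_l}(2))$ back to $\mathrm{head}(g_{i_{l+1}}(1)) = \mathrm{head}(r_{i_{l+1}}(1))$, followed by the forward $r_{i_{l+1}}$-alone subpath from $\mathrm{head}(r_{i_{l+1}}(1))$ to $\mathrm{tail}(r_{i_{l+1}}(2))$. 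Extra strong reachability (distinct $g_j$'s share only $t_2$ and distinct $r_i$'s share only $t_1$), combined with the no-interior-green-vertex observation, makes these pieces internally vertex-disjoint so that the walk is a simple cycle; moreover, at every vertex on some $g_{i_{l+1}}$ the incident reversed-green cycle arc itself lies on $g_{i_{l+1}}$, so the cycle is regular, yielding a forbidden $\mathbf{P}_{t_2}$-semi-cycle. Case $(B)$ is symmetric (swap the colors and work in $\mathcal{N}_1$), and Case $(C)$ reduces to Case $(A)$: setting $k=1$, the uniqueness of the $\{r_1, g_1\}$-l.c.s.\ forbids $g_1$ from sharing with $r_2$ or $r_3$, so $\sigma(2) = 3$ and $\sigma(3) = 2$ — a $2$-cycle in $\sigma$ — and the same construction on the indices $\{2,3\}$ produces a semi-cycle.

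The main obstacle I anticipate is verifying the regularity and vertex-disjointness of the cycle when the reversed green subpath traverses further intermediate l.c.s.'s of $g_{i_{l+1}}$ with red paths: those l.c.s.\ arcs remain on $g_{i_{l+1}}$, so regularity survives, while extra strong reachability keeps internal vertices of distinct green (resp.\ red) paths apart. Careful bookkeeping at the junction vertices $\mathrm{head}(r_{i_l}(1))$ and $\mathrm{tail}(r_{i_l}(2))$ — in particular, showing none of them coincide and that the red-alone segments genuinely contain no green vertex — is where the argument is most delicate.
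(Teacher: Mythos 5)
Your proposal is correct and is essentially the paper's own argument in contrapositive form: the paper directly eliminates, path by path, the possible locations of $g_j(2)$ (resp.\ $r_i(2)$) using Lemma~\ref{lem-basic-crossing} together with the observation that closing such a chain of consecutive l.c.s.'s would create a forbidden crossing, which is precisely a cycle of your map $\sigma$, and your cases $(A)$, $(B)$, $(C)$ match its steps $(2)$, $(1)$, $(3)$. The only substantive difference is presentational: you invoke Theorem~\ref{thm-main} and build the regular cycle of $\mathcal{N}_2$ (equivalently the $\mathbf{P}_{t_2}$-semi-cycle) explicitly, with the disjointness checks spelled out, where the paper simply asserts that the corresponding segments ``form a $r$-crossing.''
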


\begin{proof}
${\mathbf(1)}$ We first prove that there exists a green path $g_j$ such that $\ell(g_j) = 1$. To this end, note that if for all $1\leq i\leq 3$, $\ell(g_i)=1$, then the desired result is obviously true. Hence, we suppose, without loss of generality, that $\ell(g_1) > 1$. Clearly, by Lemma~\ref{lem-basic-crossing}, $g_1(2) \not \subseteq r_1$. Thus, we further assume in the following that $g_1(2) \subseteq r_2$ (See Fig.~\ref{proof-lemma-1}(a)).

Now, we consider $g_2$. If $\ell(g_2)=1$, then we are done. So we suppose in the following that $\ell(g_2) > 1$. Then, by Lemma~\ref{lem-basic-crossing}, we deduce that $g_2(2) \not \subseteq r_2$. We also have that $g_2(2) \not \subseteq r_1$ since otherwise $g_1(1, 2)$ and $g_2(1, 2)$ form a $r$-crossing. So, we have $g_2(2)\subseteq r_3$ (See Fig.~\ref{proof-lemma-1}(b)).

Now, consider $g_3$ and suppose, by way of contradiction, that $\ell(g_3) > 1$. Then, we have $(1)$ $g_3(2) \not \subseteq r_3$ (by Lemma \ref{lem-basic-crossing}); $(2)$ $g_3(2) \not \subseteq r_2$ since otherwise $g_2(1, 2)$ and $g_3(1, 2)$ form a $r$-crossing; and $(3)$ $g_3(2)\not \subseteq r_1$ since otherwise $g_1(1, 2)$, $g_2(1, 2)$ and $g_3(1, 2)$ form a $r$-crossing. Hence, we obtain a contradiction to the existence of $g_3(2)$ and thus deduce that $\ell(g_3) = 1$, completing the proof of $(1)$.

${\mathbf(2)}$ By considering the red paths in the parallel manner, we can find a red path, say, $r_i$, such that $\ell(r_i) = 1$.

${\mathbf(3)}$ We now prove $i\neq j$ by contradiction. Without loss of generality, we suppose $i=j=1$, i.e., $\ell(r_1) = \ell(g_1) = 1$. Note that if $\ell(g_2) =1$, then we are done. Hence, we suppose in the following that $\ell(g_2) > 1$. Clearly, $g_2(2) \not \subseteq r_2$ by Lemma~\ref{lem-basic-crossing} and $g_2(2)\not \subseteq r_1$ since $\ell(r_1) = 1$. Hence, $g_2(2) \subseteq r_3$.

Now, consider $g_3$. If $\ell(g_3) = 1$, then we are done since $\ell(r_1)=\ell(g_3)=1$. So, we suppose $\ell(g_3) > 1$. Clearly, $g_3(2) \not \subseteq r_3$ by Lemma~\ref{lem-basic-crossing}, and $g_3(2)\not \subseteq r_2$ since otherwise $g_2(1, 2)$ and $g_3(1, 2)$ form a $r$-crossing. Note that $\ell(r_1) = 1$, we have $g_3(2) \not \subseteq r_1$, which implies that $\ell(g_3) = 1$, completing the proof of the lemma.
\end{proof}

\begin{figure}[htbp]
  \centering
  \includegraphics[width=4cm]{fig1.6}~~~~~~~~~~~~~~~~~~~~~~~
  \includegraphics[width=4cm]{fig1.7}
  \caption{Proof of Lemma \ref{lem-crossing}.}\label{proof-lemma-1}
\end{figure}

A careful examination of the above proof, in particular, Step $(3)$ thereof, reveals that it actually yields a stronger result:
\begin{cor}\label{lem-crossing-2}
If there exists a feasible $i$ such that $\ell(r_i) = 1$ (resp. $\ell(g_i) =1$), then there exists a feasible $j$ such that $j \neq i$ and $\ell(g_j) = 1$ (resp. $\ell(r_j) = 1$).
\end{cor}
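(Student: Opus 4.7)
The plan is to observe that Step~$(3)$ in the proof of Lemma~\ref{lem-crossing} already delivers the desired statement, although it is presented there as an auxiliary step toward a weaker conclusion rather than as a stand-alone corollary. Concretely, that step begins with the hypothesis $\ell(r_1) = \ell(g_1) = 1$, but a careful line-by-line inspection reveals that the assumption $\ell(g_1) = 1$ is never actually invoked: every exclusion of the form ``$g_k(2) \not\subseteq r_1$'' is justified purely by $\ell(r_1) = 1$, while the remaining exclusions rely only on Lemma~\ref{lem-basic-crossing} or on an $r$-crossing formed with the segment $g_2(1,2)$. What the argument really establishes is that $\ell(r_1) = 1$ alone forces either $\ell(g_2) = 1$ or $\ell(g_3) = 1$.

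With this observation in hand, the first half of the corollary follows by relabeling: given $\ell(r_i) = 1$ for some feasible $i$, rename the indices so that $i = 1$, rerun the argument of Step~$(3)$ verbatim, and extract some $j \in \{2, 3\}$ satisfying $\ell(g_j) = 1$; this $j$ automatically obeys $j \neq i$, as required. The ``resp.''\ half is then obtained by a symmetric argument, namely swapping the roles of the red paths and the green paths throughout; this swap is permissible because $\mathcal{N}_{t_1, t_2}$ is defined symmetrically in the two colors and Lemma~\ref{lem-basic-crossing} was stated symmetrically as well.

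The main (and essentially only) issue is the bookkeeping verification that the argument in Step~$(3)$ of Lemma~\ref{lem-crossing} truly does not secretly rely on the auxiliary hypothesis $\ell(g_1) = 1$. This check is routine, but it is worth writing out explicitly since the strengthened statement will presumably be invoked downstream in the topological analysis of stable $3$-pair networks. For this reason I would present the corollary essentially as a remark extending the proof of Lemma~\ref{lem-crossing} rather than as a self-contained argument, highlighting precisely which inequalities in that proof use which hypothesis.
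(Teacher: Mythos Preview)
Your proposal is correct and coincides exactly with the paper's own justification: the paper does not give a separate proof but simply states that a careful examination of Step~$(3)$ in the proof of Lemma~\ref{lem-crossing} already yields the stronger statement. Your observation that the hypothesis $\ell(g_1)=1$ is never invoked in that step is precisely the content of this ``careful examination.''
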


\begin{defn}[Non-degenerated $\mathcal{N}_{t_1, t_2}$]
We say $\mathcal{N}_{t_1, t_2}$ is {\em non-degenerated} if there uniquely exist distinct $i,j$ such that $\ell(r_i) = \ell(g_j) = 1$, otherwise {\em degenerated}.
\end{defn}

The following corollary lists all possible topologies of a degenerated $\mathcal{N}_{t_1, t_2}$.

\begin{thm}\label{lem-degenerated-config}
A degenerated $\mathcal{N}_{t_1, t_2}$ is equivalent to $(a)$, $(b)$, or $(c)$ of Fig.~\ref{fig-deg-config} in the sense that the two are isomorphic if each l.c.s. is treated as a single vertex.
\end{thm}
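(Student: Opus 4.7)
The plan is to enumerate degenerated configurations by a careful case analysis built on Lemma~\ref{lem-crossing}, Corollary~\ref{lem-crossing-2}, and Lemma~\ref{lem-basic-crossing}. Write $R=\{i:\ell(r_i)=1\}$ and $G=\{j:\ell(g_j)=1\}$; Lemma~\ref{lem-crossing} guarantees that the ``witness set'' $\mathcal{W}=\{(i,j)\in R\times G: i\neq j\}$ is nonempty, and by definition $\mathcal{N}_{t_1,t_2}$ is degenerated precisely when $|\mathcal{W}|\geq 2$. So the starting point is to take two distinct witnesses $(i,j),(i',j')\in\mathcal{W}$.

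After relabelling so that one witness is $(1,2)$, I split the degeneracy into three prototypical sub-cases: (i) two witnesses share the same red index, e.g.\ $(1,2),(1,3)\in\mathcal{W}$, which forces $\{2,3\}\subseteq G$; (ii) two witnesses share the same green index, e.g.\ $(1,2),(3,2)\in\mathcal{W}$, which forces $\{1,3\}\subseteq R$; and (iii) two ``disjoint'' witnesses, e.g.\ $(1,2),(2,1)$ or $(1,2),(2,3)$ or $(1,2),(3,1)$. Using the red/green symmetry of $\mathcal{N}_{t_1,t_2}$ together with Corollary~\ref{lem-crossing-2}, several of these possibilities collapse onto each other, and the residues are the candidates for $(a)$, $(b)$, $(c)$ of Fig.~\ref{fig-deg-config}.

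Next, within each sub-case I would determine the rigid placement of every l.c.s. The hypothesis $\ell(r_i)=1$ says that $r_i$ meets the green paths along a single maximal common segment; since $r_i$ and $g_i$ already share the source $s_i$, while every red path ends at $t_1$ and every green path ends at $t_2$, this sole l.c.s.\ essentially dictates how $r_i$ merges and separates with the green paths. Repeatedly invoking Lemma~\ref{lem-basic-crossing} and the crossing obstructions used in Step~(3) of the proof of Lemma~\ref{lem-crossing}, I can pin down the topological ordering and the inclusion pattern of every l.c.s.\ on the six paths.

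Finally, contracting each l.c.s.\ to a single vertex, I verify that the resulting labelled DAG is isomorphic to exactly one of $(a)$, $(b)$, $(c)$ of Fig.~\ref{fig-deg-config}. The main obstacle is sub-case (iii): there the two witnesses constrain different sources and different sinks, so potential $r$- and $g$-crossings must be chased through the full six-path network simultaneously, ruling out mixed configurations that are not already identified with one of the three normal forms. This is essentially the bookkeeping that drove Step~(3) of Lemma~\ref{lem-crossing}, now applied in parallel for both colors.
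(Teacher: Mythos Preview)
Your plan is sound and would reach the same three configurations, but your case split is organized less efficiently than the paper's, and your assessment of where the difficulty lies is inverted. The paper does not split on the relationship between two witness pairs; instead it asks first whether some index lies in $R\cap G$, i.e.\ whether there exists $i$ with $\ell(r_i)=\ell(g_i)=1$ (Case~1), and otherwise whether two red paths have $\ell=1$ (Case~2, which with Corollary~\ref{lem-crossing-2} forces the remaining index into $G$; the color-swapped situation is symmetric). Case~1 is then dispatched almost immediately: Corollary~\ref{lem-crossing-2} produces $j\neq i$ in $R$ and $l\neq i$ in $G$, and either $j=l$ (forcing all three indices into $R\cap G$, picture~$(a)$) or $j\neq l$ and one application of Lemma~\ref{lem-basic-crossing} pins $g_j(2)=r_l(2)$ (picture~$(b)$). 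Case~2 yields picture~$(c)$ after locating $r_l(2)$ and $r_l(3)$ on $g_i,g_j$.

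The observation you are missing is that your ``main obstacle'' sub-case~(iii) \emph{is} the paper's easy Case~1. Two witnesses $(i,j),(i',j')$ with $i\neq i'$ and $j\neq j'$ make $\{i,i'\}$ and $\{j,j'\}$ two $2$-subsets of $\{1,2,3\}$, which necessarily intersect; any common element lies in $R\cap G$. Once you see this, there is no need to ``chase crossings through the full six-path network simultaneously'': Corollary~\ref{lem-crossing-2} applied at that common index does all the work. Conversely, your sub-cases (i) and (ii) either also produce an element of $R\cap G$ (when the extra index supplied by Corollary~\ref{lem-crossing-2} overlaps) or else give exactly the paper's Case~2 and its color-swap. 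So the plan is correct, but reorganizing around the $R\cap G$ dichotomy makes the bookkeeping you anticipate in sub-case~(iii) disappear.
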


\begin{proof}
We will have to deal with the following two cases:
\begin{description}
\item[$1)$] there exists $i$ such that $\ell(r_i) = \ell(g_i) = 1$. In this case, by Corollary~\ref{lem-crossing-2}, we have the following subcases:
\begin{description}
  \item[$1.1)$] There exists $j \neq i$ such that $\ell(g_j) = 1$; $\ell(r_j) = 1$. In this case, it is easy to see that there exists $l$ distinct from both $i$ and $j$ such that $\ell(r_l) = \ell(g_l) =1$ as shown in $(a)$ of Fig.~\ref{fig-deg-config}.
  \item[$1.2)$] There exist $j \neq i$ and $l \neq i$ such that $\ell(r_j) = 1$; $\ell(g_l) = 1$. In this case, if $\ell(g_j) = 1$, we have Case $1.1)$; otherwise, we have $g_j(2)=r_l(2)$ as shown in $(b)$ of Fig.~\ref{fig-deg-config}.
\end{description}
\item[$2)$] there exist distinct $i,j,l$ such that $\ell(r_i) = \ell(r_j) = \ell(g_l)=1$. In this case, we have either $r_l(2)=g_j(2)$; $r_l(3)=g_i(2)$ as shown in $(c)$ of Fig.~\ref{fig-deg-config} or $r_l(2)=g_i(2)$; $r_l(3)=g_j(2)$ resulting a network equivalent to $(c)$.
\end{description}
The proof is complete by combining all the discussions above.
\end{proof}

\begin{figure}[htbp]
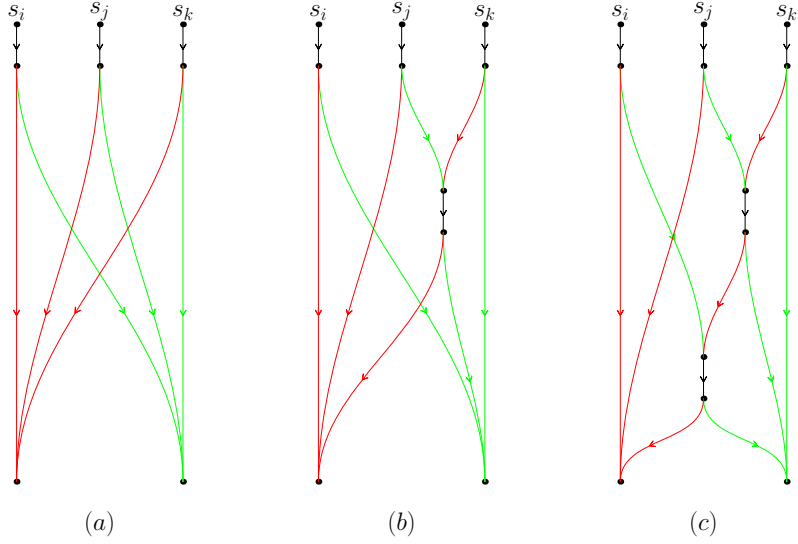

  \centering
  \includegraphics[width=2.5cm]{fig3.6}~~~~~~~~~~
  \includegraphics[width=2.5cm]{fig3.7}~~~~~~~~~~
  \includegraphics[width=2.5cm]{fig3.8}~~~~~~~~~~
  \caption{Possible cases of a degenerated $\mathcal{N}_{t_1, t_2}$.}\label{fig-deg-config}
\end{figure}

\begin{thm}\label{thm-max-config}
A non-degenerated $\mathcal{N}_{t_1, t_2}$ is equivalent to one of the five networks as shown in Fig.~\ref{Non-dege-Config} in the sense that the two are isomorphic if each l.c.s. is treated as a single vertex.
\end{thm}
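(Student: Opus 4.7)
The plan is to push the analysis of Theorem~\ref{lem-degenerated-config} one step further, starting from the unique pair $(i,j)$ with $i\neq j$ and $\ell(r_i)=\ell(g_j)=1$ guaranteed by non-degeneracy. Relabel so this pair is $(1,2)$. Checking each of the five competing pairs $(1,3),(2,1),(2,3),(3,1),(3,2)$ against the uniqueness hypothesis forces $\ell(r_3)>1$ and $\ell(g_3)>1$ and not both of $\ell(r_2),\ell(g_1)$ equal to $1$; Corollary~\ref{lem-crossing-2} moreover kills the asymmetric remainders (if, say, $\ell(r_2)=1$ it would demand some $\ell(g_k)=1$ with $k\neq 2$, contradicting $\ell(g_1),\ell(g_3)>1$). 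The starting data is therefore rigid: $\ell(r_1)=\ell(g_2)=1$ together with $\ell(r_2),\ell(r_3),\ell(g_1),\ell(g_3)\geq 2$.

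Next I would locate the successive l.c.s.'s $g_1(k),g_3(k),r_2(k),r_3(k)$ for $k\geq 2$. Since $\ell(r_1)=1$, no $g_l(k)$ with $k\geq 2$ can sit inside $r_1$; symmetrically, no $r_l(k)$ with $k\geq 2$ can sit inside $g_2$. Lemma~\ref{lem-basic-crossing} forbids two consecutive l.c.s.'s of one path from lying in a single opposite-colour path, and the same crossing-exclusion argument used in the proof of Lemma~\ref{lem-crossing} forbids the alternating patterns between two paths of the same colour that would produce an opposite-colour crossing and hence violate stability. These two rules, together with the fact that only two candidate host paths remain on each side ($\{r_2,r_3\}$ for the greens and $\{g_1,g_3\}$ for the reds), severely constrain the admissible placements and, in particular, bound each $\ell(\cdot)$ by a small constant.

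Iterating these exclusions, I would branch on the pair $(\ell(g_1),\ell(g_3))$ and, within each branch, on which of $\{r_2,r_3\}$ hosts each $g_l(k)$ for $k\geq 2$; the admissible placements of $r_2(k),r_3(k)$ within $\{g_1,g_3\}$ are then dictated by the same crossing constraints and by the $r\leftrightarrow g$, $t_1\leftrightarrow t_2$ symmetry. Most branches collapse, and after contracting each l.c.s.\ to a single vertex the survivors are precisely the five topologies in Fig.~\ref{Non-dege-Config}. The proof would conclude by exhibiting a concrete realisation of each survivor, ensuring no case in the figure is spurious.

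The main obstacle is not conceptual but combinatorial: the case tree is broad but shallow, and the crossing-exclusion arguments of Lemma~\ref{lem-crossing} kill individual branches in one line each, yet a careful bookkeeping pass is needed to verify that exactly five equivalence classes remain. I expect the $r\leftrightarrow g$, $t_1\leftrightarrow t_2$ symmetry to cut the work roughly in half, and I would organise the final write-up around that symmetry to keep it readable.
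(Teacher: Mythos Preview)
Your proposal is correct and follows essentially the same approach as the paper: both start from the unique pair with $\ell(r_i)=\ell(g_l)=1$, use Lemma~\ref{lem-basic-crossing} together with the crossing-exclusion argument from Lemma~\ref{lem-crossing} to constrain where each successive l.c.s.\ can sit, and reduce cases via the colour/source relabelling symmetry. The only organisational difference is that the paper branches first on the host of $g_i(2)$ and $g_j(2)$ (two cases, the second reducing to the first by relabelling) and then on the values of $\ell(g_i),\ell(g_j)$, whereas you propose branching on $(\ell(g_1),\ell(g_3))$ first and then on hosts; either ordering works and leads to the same five survivors.
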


\begin{proof}
For a non-degenerated $\mathcal{N}_{t_1, t_2}$, we suppose that $\ell(r_i) = \ell(g_l) = 1$ and consider all the possible l.c.s. of $\{r_j, r_l\}$ and $\{g_i, g_j\}$. Recalling that $r_l(1)=g_l(1)$, for all feasible $l$, we start our argument by considering $g_i(2)$ and $g_j(2)$. By Lemma~\ref{lem-basic-crossing}, we have the following two cases:
\begin{description}
\item[$1)$] $g_i(2) \subseteq r_j$ and $g_j(2) \subseteq r_l$. In this case, by Lemma~\ref{lem-basic-crossing}, we infer that $r_j(2) \not \subseteq g_j$ and hence $r_j(2) \subseteq g_i$, which implies that $g_i(2)=r_j(2)$ (due to the acyclicity of $\mathcal{N}$). We then further consider the following two subcases:
\begin{description}
  \item[$1.1)$] $\ell(g_i) = 2$.
  \item[$1.2)$] $\ell(g_i) > 2$.
\end{description}
In Case $1.1)$, it is easy to see that $g_j(2)=r_l(2)$, which leads to the following two subcases:
\begin{description}
  \item[$1.1.1)$] $\ell(g_j) = 2$. In this case, $\mathcal{N}_{t_1, t_2}$ is equivalent to (1.1.1) of Fig.~\ref{Non-dege-Config}.
  \item[$1.1.2)$] $\ell(g_j) \geq 3$. In this case, $g_j(3) \subseteq r_j$. By Lemma~\ref{lem-basic-crossing} and the acylicity of the network, we have $g_j(3)=r_j(3)$. We declare that $\ell(g_j) = 3$ since if otherwise $g_j(4) \subseteq r_l$, again by Lemma~\ref{lem-basic-crossing} and the acyclicity of the network, we have $g_j(4)=r_l(3)$ and hence $r_l(2), r_l(3)\subseteq g_j$, which contradicts Lemma~\ref{lem-basic-crossing}. Hence, in this case, $\mathcal{N}_{t_1, t_2}$ is equivalent to (1.1.2) of Fig.~\ref{Non-dege-Config}.
\end{description}
In Case $1.2)$, since $g_i(2) \subseteq r_j$, we have $g_i(3) \subseteq r_l$. Since $g_j(2), g_i(3) \subseteq r_l$, we have to deal with the following two subcases:
\begin{description}
  \item[$1.2.1)$] $g_j(2)=r_l(2)$ and $g_i(3)=r_l(3)$. In this case, if $\ell(g_j) \geq 3$, then by Lemma~\ref{lem-basic-crossing}, $g_j(3) \subseteq r_j$, which however would imply $g_j(2, 3)$ and $g_i(2, 3)$ form a $r$-crossing. Hence, we have $\ell(g_j) = 2$. Now, if $\ell(g_i) \geq 4$, then by Lemma~\ref{lem-basic-crossing}, $g_i(4) \subseteq r_j$, which further implies $g_i(4)=r_j(3)$. Hence, $r_j(2), r_j(3) \subseteq g_i$, which contradicts Lemma~\ref{lem-basic-crossing}. Hence $\ell(g_j) =2 , \ell(g_i) =3$, and $\mathcal{N}_{t_1, t_2}$ is equivalent to (1.2.1) of Fig.~\ref{Non-dege-Config}.
  \item[$1.2.2)$] $g_i(3)=r_l(2)$ and $g_j(2)=r_l(3)$. In this case, if $\ell(g_i) \geq 4$, then $g_i(4) \subseteq r_j$ and hence $g_i(3, 4)$ and $g_j(1, 2)$ form a $r$-crossing, which contradicts the stability of the network. Thus, $\ell(g_i) =3 $ and we have to consider the following two subcases:
\end{description}
\begin{description}
  \item[$1.2.2.1)$] $\ell(g_j) =2$. In this case, we conclude that $\mathcal{N}_{t_1, t_2}$ is equivalent to (1.2.2.1) of Fig.~\ref{Non-dege-Config}.
  \item[$1.2.2.2)$] $\ell(g_j) \geq 3$. In this case, by Lemma~\ref{lem-basic-crossing}, we have $g_j(3) \subseteq r_j$, which further implies $g_j(3)=r_j(3)$. Now, if $\ell(g_j) \geq 4$, then by Lemma~\ref{lem-basic-crossing}, $g_j(4) \subseteq r_l$, which further implies $g_j(4)=r_l(4)$ and hence $r_l(3),r_l(4) \subseteq g_j$, which contradicts Lemma~\ref{lem-basic-crossing}. Hence $\ell(g_j) =3$ and we conclude that $\mathcal{N}_{t_1, t_2}$ is equivalent to (1.2.2.2) of Fig.~\ref{Non-dege-Config}.
\end{description}

\item[$2)$] $g_i(2) \subseteq r_l$ and $g_j(2) \subseteq r_l$. In this case, without loss of generality, we assume $g_j(2)=r_l(2)$ and $g_i(2)=r_l(3)$ (since otherwise, we can relabel $s_i$, $s_j$ as $s_j$, $s_i$, respectively). By Lemma~\ref{lem-basic-crossing}, we have $g_i(3) \subseteq r_j$ and $g_j(3) \subseteq r_j$. Then, there are two cases:
\begin{description}
  \item[$2.1)$] $g_j(3)=r_j(2)$;
  \item[$2.2)$] $g_i(3)=r_j(2)$.
\end{description}
It is easy to see that $2.1)$ is impossible since otherwise $r_j(1), r_j(2) \subseteq g_j$, which contradicts Lemma~\ref{lem-basic-crossing}. Hence, we have $r_j(2) \subseteq g_i$ and $r_l(2) \subseteq g_j$. By switching the colors of the paths and relabeling sources $s_i$, $s_l$ as $s_l$, $s_i$, respectively, we will reach Case $1)$, which has been dealt with before.
\end{description}

\end{proof}
\begin{figure}[htbp]
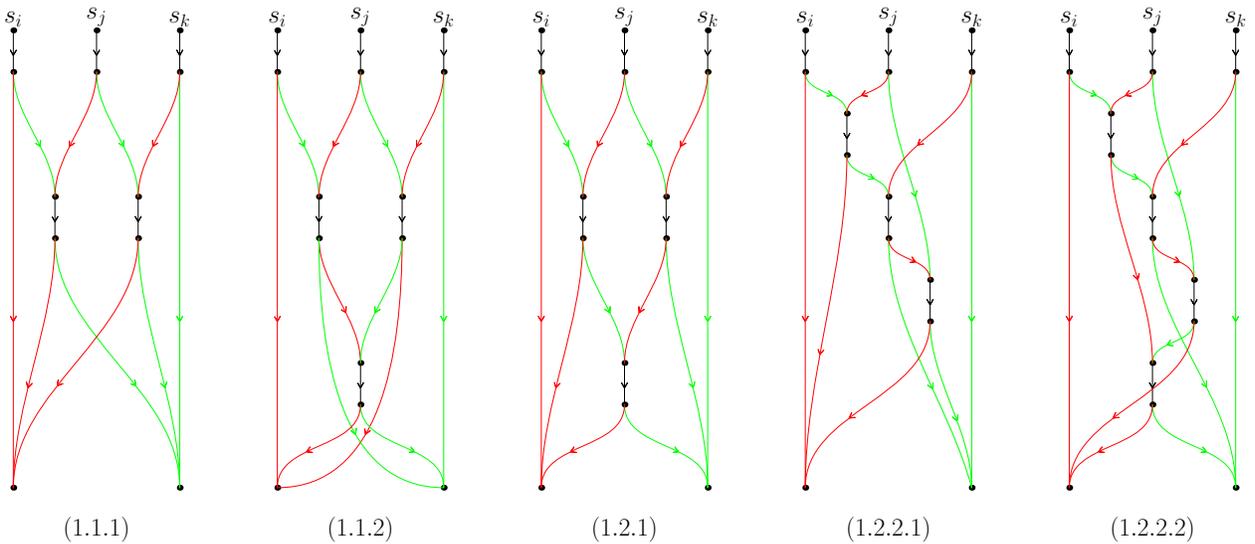

  \centering
  \includegraphics[width=2.5cm]{fig3.1}~~~~~~~~~~
  \includegraphics[width=2.5cm]{fig3.2}~~~~~~~~~~
  \includegraphics[width=2.5cm]{fig3.3}~~~~~~~~~~
  \includegraphics[width=2.5cm]{fig3.4}~~~~~~~~~~
  \includegraphics[width=2.5cm]{fig3.5}~~~~~~~~~~
  \caption{Possible cases of a non-degenerated $\mathcal{N}_{t_1, t_2}$.}\label{Non-dege-Config}
\end{figure}

The following corollary follows from an inspection of all the possible cases of a non-degenerated $\mathcal{N}_{t_1, t_2}$ as stated in Theorem~\ref{thm-max-config}.
\begin{cor}\label{lem-0-2}
Suppose that $\mathcal{N}_{t_1, t_2}$ is non-degenerated with $\ell(r_i) = \ell(g_l) =1$. Then, $(1)$ $g_i(2)\neq r_l(2)$; $(2)$ there exist a unique $\{g_i, r_j\}$-l.c.s. and a unique $\{g_j, r_l\}$-l.c.s., where $j$ is distinct from both $i$ and $l$; and $(3)$ one of the following $3$ statements holds:
\begin{description}
  \item[$a)$] $g_i(2)=r_j(2)$ and $g_j(2)=r_l(2)$;
  \item[$b)$] $g_i(2)=r_j(2)$ and $g_j(2)=r_l(3)$;
  \item[$c)$] $g_i(3)=r_j(2)$ and $g_j(2)=r_l(2)$.
\end{description}
\end{cor}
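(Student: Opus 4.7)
The plan is to derive the corollary by direct inspection of the case analysis already carried out in Theorem~\ref{thm-max-config}. That theorem partitioned a non-degenerated $\mathcal{N}_{t_1,t_2}$ with $\ell(r_i) = \ell(g_l) = 1$ into five topological types---networks (1.1.1), (1.1.2), (1.2.1), (1.2.2.1), (1.2.2.2) of Figure~\ref{Non-dege-Config}---plus the ``Case 2)'' branch that the proof dispatched by an $s_i \leftrightarrow s_l$ relabeling. I would treat these six configurations as the starting point and read off statements (1)--(3) by direct inspection.

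First I would dispatch (3). In Cases (1.1.1), (1.1.2), and (1.2.1), the proof of Theorem~\ref{thm-max-config} already derived $g_i(2) = r_j(2)$ (from the common Case 1) step) together with $g_j(2) = r_l(2)$, yielding statement $(a)$. In Cases (1.2.2.1) and (1.2.2.2) the same proof derived $g_i(2) = r_j(2)$ and $g_j(2) = r_l(3)$, yielding $(b)$. In the Case 2) branch, \emph{without} applying the relabeling, the identifications $g_j(2) = r_l(2)$ and $g_i(3) = r_j(2)$ appear directly in the argument, giving $(c)$.

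Next I would verify (2) by examining each of the six diagrams. In every one, the only l.c.s.\ common to $g_i$ and $r_j$ is the one exhibited in (3): any additional common segment would, by Lemma~\ref{lem-basic-crossing} applied to $g_i$, produce two consecutive l.c.s.'s on $g_i$ both lying inside $r_j$, which is forbidden. The identical reasoning applies to $\{g_j, r_l\}$. Claim (1) then follows from the same enumeration: in each of the six figures $g_i(2)$ and $r_l(2)$ are distinct contracted vertices---visible at a glance in $(a)$ and $(b)$, since $g_i(2) = r_j(2)$ sits on the red path $r_j \neq r_l$ in the contracted picture, and in $(c)$ since $g_i(2) = r_l(3) \neq r_l(2)$ as distinct l.c.s.'s along $r_l$.

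The only real obstacle I anticipate is the bookkeeping for Case 2). The proof of Theorem~\ref{thm-max-config} reduces Case 2) to Case 1) through an $s_i \leftrightarrow s_l$ swap, which is convenient for enumerating topologies but obscures the distinction between statements $(a)$ and $(c)$. One must resist applying that relabeling and instead record the pre-swap identifications, so that $(c)$ is recognised as a genuinely separate outcome rather than being absorbed into $(a)$. Once this is kept straight, all three claims fall out from the six concrete pictures.
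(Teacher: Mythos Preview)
Your proposal is correct and follows exactly the paper's approach: the paper simply states that the corollary ``follows from an inspection of all the possible cases of a non-degenerated $\mathcal{N}_{t_1, t_2}$ as stated in Theorem~\ref{thm-max-config},'' and you have spelled out that inspection in detail, including the correct observation that Case~2) of that theorem must be read before the color/source swap to produce alternative~$(c)$.
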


We say $\mathcal{N}_{t_1, t_2}$ is of {\em type $1$} if $a)$ of Corollary~\ref{lem-0-2} holds, and of {\em type $2$} otherwise. It is easy to check that in Fig.~\ref{Non-dege-Config}, $(1.1.1), (1.1.2), (1.2.1)$ are of type $1$ and $(1,2,2,1),(1,2,2,2)$ are of type $2$ since they satisfy $b)$. Note that in Fig.~\ref{Non-dege-Config}, if we switch colors of paths and source labels $i$ and $l$, then $(1.1.1), (1.1.2), (1.2.1)$ still satisfy $a)$ but $(1,2,2,1),(1,2,2,2)$ satisfy $c)$ of Corollary~\ref{lem-0-2} instead.

\subsection{A Forbidden Structure}

For any $\mathbf{P}$-path $p$ in $\mathcal{N}_{t_i, t_j}$, let $\ell^{i,j}(p)$ be the number of $\{P_{s_l, t_i}, P_{s_{l'}, t_j}\}$-l.c.s's ($1\leq l, l'\leq 3$) on $p$ and we order such l.c.s's as $p^{i,j}(1) <p^{i,j}(2) < \dots < p^{i,j}(\ell^{i,j}(p))$. Here we remark that the notation $\ell^{i, j}(p)$, $p^{i, j}(\cdot)$ subsume $\ell(p)$, $p(\cdot)$ as the latter two are simply $\ell^{1, 2}(p)$, $p^{1, 2}(\cdot)$, respectively. By Lemma~\ref{lem-crossing}, the following sets are non-empty:
$$
m_{i,j}^i:=\{l: \ell^{i,j}(P_{s_l, t_i})=1\}, \quad m_{i,j}^j:=\{l: \ell^{i,j}(P_{s_l, t_j})=1\}.
$$
Here, let us add that the two subscripts of $m_{i,j}^i$ are interchangeable, more precisely, $m_{i,j}^i=m_{j,i}^i$. In the case $m_{i,j}^i$ contains only one element, e.g., $\mathcal{N}_{t_i, t_j}$ is degenerated, we may write $m_{i,j}^i=l$ instead of $m_{i, j}^i=\{l\}$ for simplicity. For example, for the network  depicted in Fig.~\ref{Minimal Networks}(a), each $\mathcal{N}_{t_i, t_j}$ is non-degenerated and $m_{1,2}^1=1$, $m_{1,2}^2=3$, $m_{1,3}^1=1$, $m_{1,3}^3=3$, $m_{2,3}^2=1$ and $m_{2,3}^3=3$.

\begin{thm}\label{thm-0-3}
There exists no stable network such that $(1)$ $m_{i,j}^i=l$, $m_{i,j}^j=i$; $m_{i,l}^i=l$, $m_{i,l}^l=j$; $m_{j,l}^j=i$, $m_{j,l}^l=j$; $(2)$ there exists a $\{P_{s_j, t_i}, P_{s_l, t_j}, P_{s_i, t_l}\}$-l.c.s, where $i, j, l$ are all distinct from one another.
\end{thm}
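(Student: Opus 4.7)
We argue by contradiction: suppose such a stable network $\mathcal{N}$ exists, and exhibit a $\mathbf{P}_{t_c}$-semi-cycle for some $c\in\{i,j,l\}$, which by Theorem~\ref{thm-main} contradicts stability. Without loss of generality we relabel so that $(i,j,l)=(1,2,3)$, set $p:=P_{s_2,t_1}$, $q:=P_{s_3,t_2}$, $r:=P_{s_1,t_3}$, and let $u,v$ denote the tail and head of the assumed $\{p,q,r\}$-l.c.s.\ $C$.

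First, I would extract the consequences of the six $m$-conditions. In each of $\mathcal{N}_{t_1,t_2}$, $\mathcal{N}_{t_1,t_3}$, $\mathcal{N}_{t_2,t_3}$, the prescribed $m$-values identify two paths of $\ell=1$, so each of these sub-networks is non-degenerated. Applying Corollary~\ref{lem-0-2} (via the identification of $\alpha,\beta,\gamma$ in Theorem~\ref{thm-max-config} and Corollary~\ref{lem-crossing-2}), I obtain the existence and uniqueness of three l.c.s.'s: the $\{q,p\}$-l.c.s.\ $C_{12}$ in $\mathcal{N}_{t_1,t_2}$, the $\{r,p\}$-l.c.s.\ $C_{13}$ in $\mathcal{N}_{t_1,t_3}$, and the $\{r,q\}$-l.c.s.\ $C_{23}$ in $\mathcal{N}_{t_2,t_3}$. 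Since $C$ is simultaneously a $\{p,q\}$-, $\{p,r\}$-, and $\{q,r\}$-common segment, maximality forces $C\subseteq C_{12}\cap C_{13}\cap C_{23}$.

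Next, I would analyze the local branching at the head $v$ and, symmetrically, at the tail $u$. Let $a_p,a_q,a_r$ be the arcs leaving $v$ along $p,q,r$. By the maximality of $C$ as a 3-way l.c.s., these three arcs cannot all coincide, yielding four sub-cases: $(i)$ $a_p=a_q\neq a_r$, which forces $C_{12}$ to strictly extend $C$ past $v$; $(ii)$ $a_p=a_r\neq a_q$, forcing a strict extension of $C_{13}$; $(iii)$ $a_q=a_r\neq a_p$, for $C_{23}$; $(iv)$ all three arcs distinct. A parallel case-split arises at $u$ based on how the three paths enter $u$.

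Finally, in each combined sub-case I would produce a $\mathbf{P}_{t_c}$-crossing for some appropriate $c$ by using $C$ (or the relevant extended pairwise l.c.s.) as a \emph{bridge} to reroute one of the $\mathbf{P}_{t_c}$-paths while simultaneously rerouting another to maintain edge-disjointness. For example, in sub-case $(iv)$ at $v$, the segment obtained by concatenating the pre-$C$ portion of $q$ (from $s_3$ to $u$), the segment $C$, and the post-$v$ portion of $p$ (from $v$ to $t_1$) is an $s_3$-$t_1$ route which, by the uniqueness of $C_{12}$ and $C_{13}$ together with the non-degenerated topology catalogued in Theorem~\ref{thm-max-config}, can be shown to differ from $P_{s_3,t_1}$ and to be edge-disjoint from $P_{s_1,t_1}$ and from a companion rerouting of $P_{s_2,t_1}$; this yields an alternative choice of $\mathbf{P}_{t_1}$, i.e., a $\mathbf{P}_{t_1}$-crossing by Definition~\ref{def-crossing}. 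The remaining sub-cases where some pair of leaving (or entering) arcs coincide are handled by the same template, substituting the corresponding extended pairwise l.c.s.\ in place of $C$ as the bridge.

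The main obstacle is the last step: up to sixteen combined sub-cases must each be shown to admit such a rerouting, and for each one we must verify (i) pairwise edge-disjointness of the new triple with the retained path and (ii) genuine distinctness from the original $\mathbf{P}_{t_c}$, so that a real crossing (not a spurious rewriting of the same configuration) is produced. Both properties hinge on the uniqueness results in the first step and on the very restrictive list of non-degenerated topologies in Theorem~\ref{thm-max-config}, which are what rule out the coincidences that would otherwise sabotage the rerouting.
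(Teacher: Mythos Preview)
Your overall plan---contradiction via a forbidden crossing/semi-cycle---is the right one, and your first step (non-degeneracy of all three $\mathcal{N}_{t_a,t_b}$ and the uniqueness of $C_{12},C_{13},C_{23}$ from Corollary~\ref{lem-0-2}) is correct. But the decomposition and the construction that follow have a genuine gap.

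The paper does \emph{not} case-split on the local branching of $p,q,r$ at the endpoints $u,v$ of the triple l.c.s.\ $C$. Instead it splits on whether each $\mathcal{N}_{t_a,t_b}$ is of type~1 or type~2 (cases (a) versus (b)/(c) of Corollary~\ref{lem-0-2}), giving essentially three cases rather than sixteen. More importantly, in every case the crossing that is exhibited uses a segment on a path \emph{outside} $\{p,q,r\}=\{r_2,g_3,b_1\}$: in the all-type-1 case the two segments are $g_2^{1,2}(1,2)$ and $b_1^{1,3}(1,2)$, so the green path $g_2$ (not among $p,q,r$) is essential; in the type-2 subcase with $b_1^{2,3}(2)\subseteq g_2$ the segments are $r_2^{1,2}(1,2)$ and $b_3^{2,3}(1,2)$, so $b_3$ is essential. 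The order and disjointness conditions that make these segments a crossing are then verified using the type information together with part~(1) of Corollary~\ref{lem-0-2} (which rules out $g_i(2)=r_l(2)$-type coincidences). None of this is visible from the local picture at $u$ and $v$.

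Your concrete rerouting in sub-case~(iv) illustrates the difficulty: the proposed new $s_3$--$t_1$ path ends with $p[v,t_1]\subseteq r_2$, so it cannot be edge-disjoint from $r_2$; you defer this to an unspecified ``companion rerouting'' of $r_2$, but there is no segment among $p,q,r$ alone that takes $s_2$ to $t_1$ while avoiding $p[v,t_1]$. This is exactly where a path like $g_2$ (which, in the type-1 situation, runs from a point on $r_2$ to a point on $r_1$) must enter the argument. Without bringing in these ``external'' paths and the type~1/type~2 dichotomy that controls where they attach, the sixteen local sub-cases do not resolve, and the proposal as written does not close.
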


\begin{figure}[htbp]
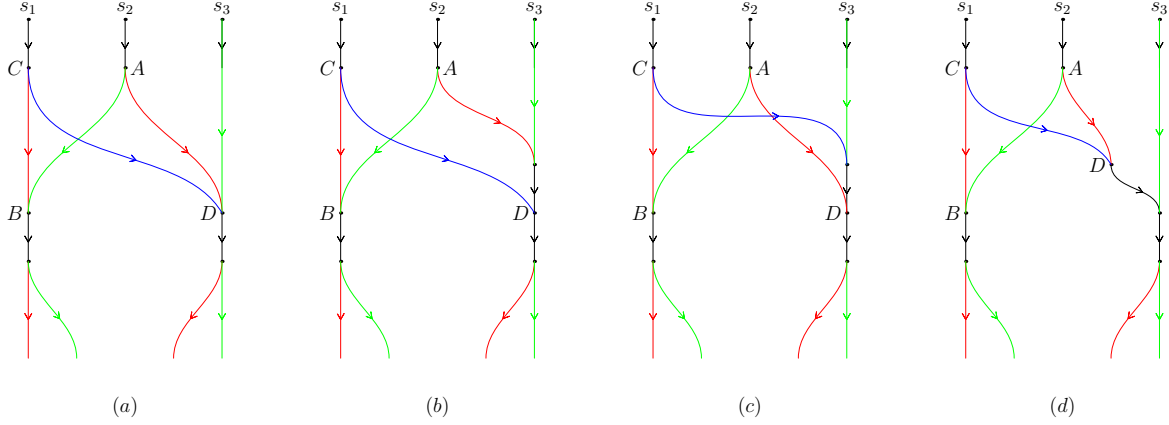

  \centering
  \includegraphics[width=3cm]{fig4.1}~~~~~~~~~~
  \includegraphics[width=3cm]{fig4.2}~~~~~~~~~~
  \includegraphics[width=3cm]{fig4.3}~~~~~~~~~~
  \includegraphics[width=3cm]{fig4.4}~~~~~~~~~~
  \caption{Proof of Case $1)$ of Theorem \ref{thm-0-3}, where we do not show path $b_3$. Note that $r_2^{1,2}(2)=g_3^{1,2}(2)$, $r_2^{1,3}(2)=b_1^{1,3}(2)$, $g_3^{2,3}(2)=b_1^{2,3}(2)$ and there exists a unique $\{r_2, g_3,b_1\}$-l.c.s.}\label{fig-case-1}
\end{figure}

\begin{proof}
Suppose, by way of contradiction, that there exists a stable network $\mathcal N$ such that $(1)$ and $(2)$ hold. Without loss of generality, we assume $i=1$, $j=2$ and $l=3$ and therefore $m_{1,2}^1=m_{1,3}^1=3$, $m_{2,3}^2=m_{1,2}^2=1$, $m_{2,3}^3=m_{1,3}^3=2$, which implies $\ell^{1,2}(r_3)=\ell^{1,3}(r_3)=1$, $\ell^{1,2}(g_1)=\ell^{2,3}(g_1)=1$ and $\ell^{1,3}(b_2)=\ell^{2,3}(b_2)=1$. We consider the following two cases:
\begin{description}
  \item[$1)$] all $\mathcal{N}_{t_1, t_2}$, $\mathcal{N}_{t_1, t_3}$ and $\mathcal{N}_{t_2, t_3}$ are of type $1$;
  \item[$2)$] any of $\mathcal{N}_{t_1, t_2}$, $\mathcal{N}_{t_1, t_3}$ or $\mathcal{N}_{t_2, t_3}$ is of type $2$.
\end{description}

We first prove the theorem for Case $1)$. Consider $\mathcal{N}_{t_1, t_2}$. Since it is of type $1$, by Lemma \ref{lem-0-2}, we have that $r_1^{1,2}(2)=g_2^{1,2}(2)$, $r_2^{1,2}(2)=g_3^{1,2}(2)$. Note that although there are several types of $\{r_2, g_3,b_1\}$-l.c.s., as shown in $(a)-(d)$ of Fig.~\ref{fig-case-1}, our argument in the following however does not depend on the specific type. In the following, we prove $g_2^{1,2}(1, 2)$ and $b_1^{1,3}(1, 2)$ (shown as $[A, B]$ and $[C, D]$, respectively in Fig.~\ref{fig-case-1}) form a $r$-crossing, which will contradict Theorem~\ref{thm-main} and yield the theorem for this case. Towards this goal, we only need to prove the following two statements:
\begin{description}
  \item[$(a)$] $C<B$ and $A<D$;
  \item[$(b)$] $g_2^{1,2}(1, 2)\cap b_1^{1,3}(1, 2)=\emptyset$.
\end{description}

For $(a)$, it is easy to see that either $B\leq C$ or $D\leq A$ will imply that $b_1^{2,3}(2) \subseteq g_2$, which contradicts the fact that $b_1^{2,3}(2)=g_3^{2,3}(2)$ since $\mathcal{N}_{t_2, t_3}$ is of type 1. Hence $(a)$ holds. For $(b)$, it is easy to see that $g_2^{1,2}(1, 2)\cap b_1^{1,3}(1, 2) \neq \emptyset$ also contradicts the fact that $b_1^{2,3}(2)=g_3^{2,3}(2)$. Hence, $(b)$ holds.

Now, we prove the theorem for Case $2)$. Without loss of generality, we suppose $\mathcal{N}_{t_1, t_2}$ is of type $2$. Then, according to Corollary~\ref{lem-0-2}, there are two possible cases. Specifically, in Fig.~\ref{fig-case-2-1} (resp. Fig.~\ref{fig-case-2-2}), $(a)$ satisfies: $r_1^{1,2}(2)=g_2^{1,2}(2)$ and $r_2^{1,2}(2)=g_3^{1,2}(3)$; and  $(b)$ satisfies: $r_1^{1,2}(3)=g_2^{1,2}(2)$ and $r_2^{1,2}(2)=g_3^{1,2}(2)$.

We consider the following two cases:
\begin{description}
  \item[$2.1)$] $b_1^{2,3}(2)\subseteq g_3$;
  \item[$2.2)$] $b_1^{2,3}(2)\subseteq g_2$.
\end{description}

\begin{figure}[htbp]
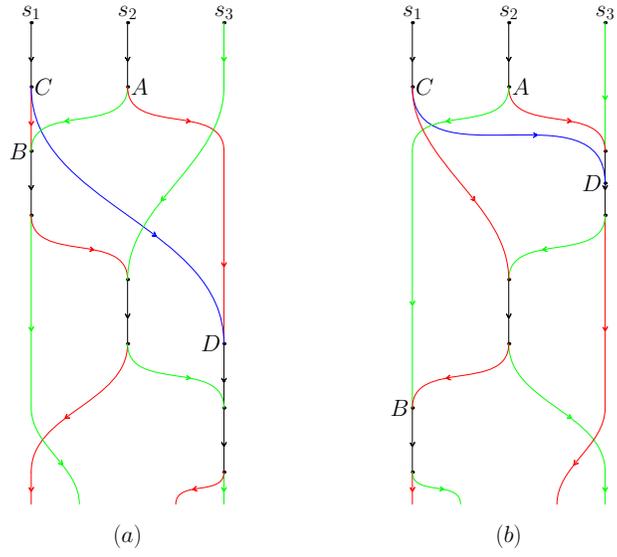

  \centering
  \includegraphics[width=3cm]{fig4.7}~~~~~~~~~~~~~~
  \includegraphics[width=3cm]{fig4.8}
  \caption{Proof of Case $2.1)$ of Theorem \ref{thm-0-3}, where we do not show path $b_3$.} \label{fig-case-2-1}
\end{figure}

For Case $2.1)$ (see Fig.~\ref{fig-case-2-1}), the proof is similar to that of Case $1)$. In this case, we can have $g_2^{1,2}(1, 2)$ and $b_1^{1,3}(1, 2)$ (shown as $[A, B]$ and $[C, D]$, respectively in Fig.~\ref{fig-case-2-1}) form a $r$-crossing, which contradicts the stability of the network.

For Case $2.2)$, since $b_1^{2,3}(2)\neq g_2^{2,3}(2)$ and $b_1^{2,3}(2)\not \subseteq g_3$, we have, by Lemma \ref{lem-0-2}, $b_1^{2,3}(2)=g_2^{2,3}(3)$ and $b_3^{2,3}(2)=g_2^{2,3}(2)$, as shown in Fig.~\ref{fig-case-2-2}. Let $A:=head(r_2^{1,2}(1))$, $B:=tail(r_2^{1,2}(2))$, $C:=head(b_3^{2,3}(1))$, $D:=head(b_3^{2,3}(2))$. Noticing that either $B\leq C$ or $D\leq A$ or $r_2^{1,2}(1, 2)\cap b_3^{2,3}(1, 2)\neq\emptyset$ will imply that $r_2^{1,3}(2)=b_3^{1,3}(2)$, which however, is impossible according to $(1)$ of Corollary~\ref{lem-0-2}. Hence, we have

\begin{figure}[htbp]
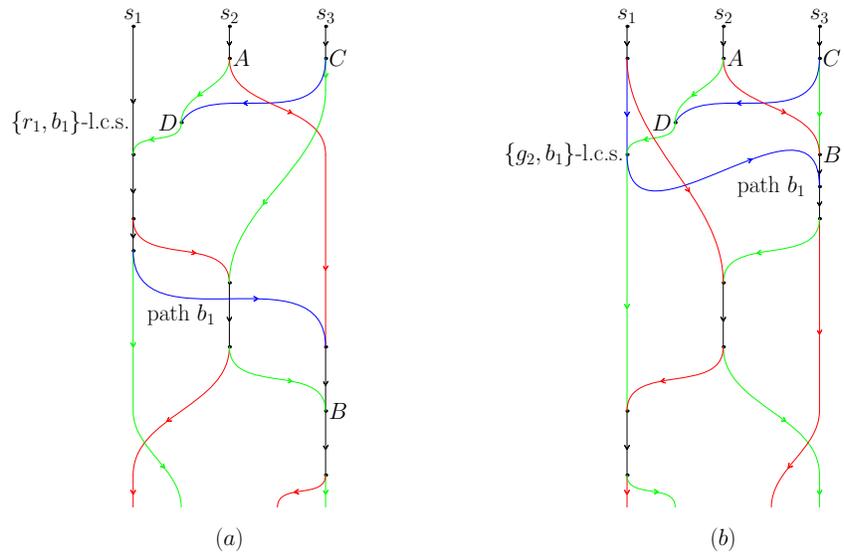

  \centering
  \includegraphics[width=4.5cm]{fig4.9}~~~~~~~~~~~~~~
  \includegraphics[width=4.5cm]{fig4.10}
  \caption{Proof of Case $2.2)$ of Theorem \ref{thm-0-3}.}\label{fig-case-2-2}
\end{figure}

\begin{description}
  \item[$(a)$] $C<B$ and $A<D$;
  \item[$(b)$] $r_2^{1,2}(1, 2)\cap b_3^{2,3}(1, 2)=\emptyset$.
\end{description}
By definition, $r_2^{1,2}(1, 2)$ and $b_3^{2,3}(1, 2)$  form a $g$-crossing, a contradiction that leads to the theorem for this case.

The proof is then complete by combining all the discussions above.
\end{proof}

\section{Main Result}

In this section, we state and prove our main result. Throughout this section, we again assume that $\mathcal{N}$ is a stable $3$-pair network.

The following seemingly trivial lemma is a key tool for us to determine $\mathcal S_{\mathcal N}$ throughout our treatment.
\begin{lem}\label{lem-basic}
If there are no $\{P_{s_{i_1},t_{j_1}}, P_{s_{i_2},t_{j_2}}\}$-l.c.s. within $\mathcal N$, then $\{(i_1,j_1),(i_2,j_2) \} \nsubseteq s$ for any $s\in \mathcal S_{\mathcal N}$.
\end{lem}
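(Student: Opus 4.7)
The plan is to prove the contrapositive: if $\{(i_1,j_1),(i_2,j_2)\} \subseteq s$ for some $s \in \mathcal{S}_{\mathcal{N}}$, then a $\{P_{s_{i_1},t_{j_1}}, P_{s_{i_2},t_{j_2}}\}$-l.c.s. must exist. This reduces everything to unpacking the two definitions involved and checking that a shared arc forces the existence of a longest common segment.

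First, I would unwind the definition of $\mathcal{S}_{\mathcal{N}}$. By construction, every $s \in \mathcal{S}_{\mathcal{N}}$ is realized by some arc $a \in A$, namely $s = \{(i,j) \in [k]\times[k] : P_{s_i,t_j}\text{ passes through }a\}$. So if $\{(i_1,j_1),(i_2,j_2)\} \subseteq s$, then both $P_{s_{i_1},t_{j_1}}$ and $P_{s_{i_2},t_{j_2}}$ traverse the same arc $a$. The single-arc segment $\{a\}$ is therefore a common segment of the two paths.

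Next, I would invoke the fact that the set of common segments of two fixed directed paths in the DAG is finite (each common segment is a subset of arcs of the finite path $P_{s_{i_1},t_{j_1}}$) and non-empty (it contains $\{a\}$). Partially ordering common segments by containment, any maximal element in this finite poset is, by definition, a $\{P_{s_{i_1},t_{j_1}}, P_{s_{i_2},t_{j_2}}\}$-l.c.s. Hence at least one such l.c.s. exists, contradicting the hypothesis of the lemma.

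There is essentially no obstacle here; the only subtlety worth flagging is verifying that a common segment which is a single arc can be extended (or stays as it is) to a maximal common segment, which follows from finiteness of the underlying paths. The lemma will then be obtained as the contrapositive of the implication just established.
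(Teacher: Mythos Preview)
Your proof is correct and is exactly the natural argument: unpack $\mathcal{S}_{\mathcal{N}}$ to find a shared arc, then extend to a maximal common segment by finiteness. The paper does not supply a proof at all---it calls the lemma ``seemingly trivial'' and moves on---so there is nothing further to compare.
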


The following lemma is useful.
\begin{lem}\label{lem-0-1}
If there exists $h$ such that $h\in m_{i,j}^{l_1}\cap m_{i,l}^{l_2}\cap m_{j,l}^{l_3}$ for some feasible $l_1, l_2, l_3$ and distinct $i, j, l$, then either $\{(h,i),(h,j)\}\notin\mathcal S_{\mathcal N}$ or $\{(h,i),(h,l)\}\notin\mathcal S_{\mathcal N}$.
\end{lem}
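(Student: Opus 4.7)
The plan is to argue by contradiction. Suppose both $\{(h,i),(h,j)\}$ and $\{(h,i),(h,l)\}$ belong to $\mathcal{S}_\mathcal{N}$. By the definition of $\mathcal{S}_\mathcal{N}$ there exist two arcs $a_1,a_2\in A$ such that the $\mathbf{P}$-paths passing through $a_1$ are exactly $\{P_{s_h,t_i}, P_{s_h,t_j}\}$ and those through $a_2$ are exactly $\{P_{s_h,t_i}, P_{s_h,t_l}\}$; in particular $a_1\neq a_2$, and both lie on $P_{s_h,t_i}$, which as a directed path in a DAG linearly orders them. Without loss of generality assume $a_1$ precedes $a_2$ on $P_{s_h,t_i}$ (the reverse case is symmetric upon exchanging the roles of $j$ and $l$).

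The main step is to parlay $\ell^{i,j}(P_{s_h,t_{l_1}})=1$ into the structural assertion that $P_{s_h,t_i}$ and $P_{s_h,t_j}$ agree along a single uninterrupted sub-path running from $s_h$ through $a_1$. Note that $P_{s_h,t_{l_1}}$ contains both $s_h$ and $a_1$, since $l_1\in\{i,j\}$. Because $s_h$ has in-degree zero, it lies on no $\mathbf{P}_{t_i}$-path other than $P_{s_h,t_i}$ and on no $\mathbf{P}_{t_j}$-path other than $P_{s_h,t_j}$; and because strong reachability makes $\mathbf{P}_{t_i}$-paths (resp.\ $\mathbf{P}_{t_j}$-paths) pairwise edge-disjoint, the same holds for the arc $a_1$. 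Consequently, both $s_h$ and $a_1$ sit inside $\{P_{s_h,t_i}, P_{s_h,t_j}\}$-l.c.s.'s on $P_{s_h,t_{l_1}}$. If these were two distinct l.c.s.'s they would contribute at least two to $\ell^{i,j}(P_{s_h,t_{l_1}})$, contradicting the hypothesis; hence they coincide, which means $P_{s_h,t_i}$ and $P_{s_h,t_j}$ share a single sub-path from $s_h$ through $a_1$.

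The same reasoning applied to $\ell^{i,l}(P_{s_h,t_{l_2}})=1$ yields that $P_{s_h,t_i}$ and $P_{s_h,t_l}$ coincide from $s_h$ through $a_2$. Since $a_1$ precedes $a_2$ along $P_{s_h,t_i}$, this initial shared sub-path with $P_{s_h,t_l}$ also contains $a_1$, forcing $P_{s_h,t_l}$ to pass through $a_1$. This flatly contradicts the fact that only $P_{s_h,t_i}$ and $P_{s_h,t_j}$ pass through $a_1$, completing the proof.

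The main obstacle is the coincidence step: one must carefully identify the maximal common segments containing $s_h$ and containing $a_1$ as genuine $\{P_{s_h,t_i}, P_{s_h,t_j}\}$-l.c.s.'s (so that they both get counted by $\ell^{i,j}$) and argue that they must be distinct unless the two paths already agree throughout the interval from $s_h$ to $a_1$. It is worth noting that the third hypothesis $h\in m_{j,l}^{l_3}$ does not appear to be used in this contradiction; it is presumably listed because all three symmetric intersection conditions arise naturally together in the intended downstream applications.
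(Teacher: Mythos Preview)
Your proof is correct and rests on the same structural insight as the paper's --- namely, that the hypotheses force each pair of paths emanating from $s_h$ to coincide only along a single initial segment starting at $s_h$. The paper's route is slightly different: rather than arguing by contradiction with witnessing arcs, it directly locates the three unique l.c.s.'s (one for each pair among $P_{s_h,t_i}$, $P_{s_h,t_j}$, $P_{s_h,t_l}$), observes that each has tail $s_h$, and then does a short case split on whether all three share the same head or only two of them do; from this it reads off that at most one of the three sets $\{(h,i),(h,j)\}$, $\{(h,i),(h,l)\}$, $\{(h,j),(h,l)\}$ can lie in $\mathcal{S}_\mathcal{N}$. So the paper uses all three hypotheses and obtains a marginally stronger three-way conclusion, whereas your contradiction argument via the linear ordering of $a_1,a_2$ along $P_{s_h,t_i}$ is more economical, using only two of the hypotheses and proving exactly the stated disjunction. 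Your observation that the hypothesis $h\in m_{j,l}^{l_3}$ is unnecessary for the lemma as stated is correct; in the downstream applications the paper only ever invokes the two-way disjunction, so nothing is lost by your streamlining.
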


\begin{figure}[htbp]
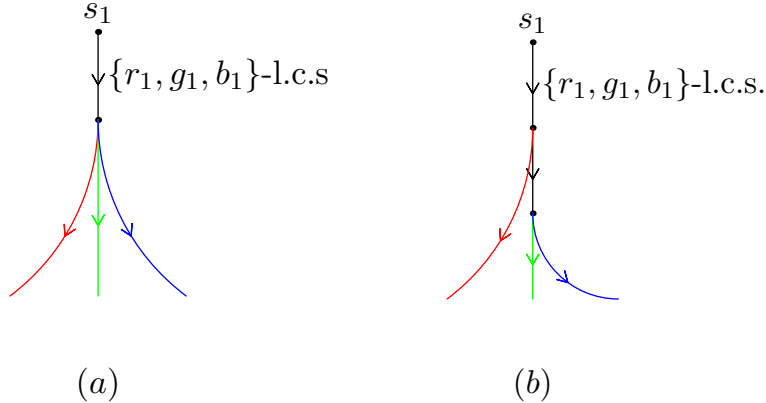

  \centering
  \includegraphics[width=4.3cm]{fig1.12}~~~~~~~~~~
  \includegraphics[width=4.3cm]{fig1.13}~~~~~~~~~~
  \caption{Proof of Lemma \ref{lem-0-1}. In $(a)$, the unique $\{r_1,g_1, b_1\}$-l.c.s. is also the unique $\{r_1,g_1\}$-l.c.s., $\{g_1, b_1\}$-l.c.s. and $\{r_1, b_1\}$-l.c.s.. In $(b)$, the unique $\{r_1,g_1, b_1\}$-l.c.s. is also the unique $\{r_1,g_1\}$-l.c.s. and $\{r_1, b_1\}$-l.c.s.. }\label{Same source}
\end{figure}

\begin{proof}
Without loss of generality, we assume $h=1$. Then, there exist a unique $\{r_1,g_1\}$-l.c.s., a unique $\{r_1, b_1\}$-l.c.s. and a unique $\{g_1, b_1\}$-l.c.s. with a same tail $s_1$. If all of them have a same head, as shown in $(a)$ of Fig.~\ref{Same source}, then $\{(1,1),(1,2),(1,3)\}\in \mathcal S_{\mathcal N}$ and none of $\{(1,1),(1,2)\}$, $\{(1,1),(1,3)\}$, $\{(1,2),(1,3)\}$ belongs to $\mathcal S_{\mathcal N}$; otherwise two of them share a same head, then $\{(1,1),(1,2),(1,3)\}\in \mathcal S_{\mathcal N}$ and at most one of $\{(1,1),(1,2)\}$, $\{(1,1),(1,3)\}$, $\{(1,2),(1,3)\}$ belongs to $\mathcal S_{\mathcal N}$ (for example, $(b)$ shows the case $\{(1,2),(1,3)\}\in \mathcal S_{\mathcal N}$). Hence, the result holds for both cases, which completes the proof.
\end{proof}

We also need the following lemma.

\begin{lem}\label{lem-1-2}
Let
$$
\mathcal C=\left(\left(
        \begin{array}{ccc}
          \frac{1}{2} & \frac{1}{4} &\frac{1}{4} \\
          \frac{1}{4}& \frac{-1}{4} & 0 \\
          \frac{1}{4} & 0 & \frac{-1}{4} \\
        \end{array}
      \right),
      \left(
        \begin{array}{ccc}
          \frac{-1}{4} & \frac{1}{4} & 0 \\
          \frac{1}{4} & \frac{1}{2} & \frac{1}{4} \\
          0 & \frac{1}{4} & \frac{-1}{4} \\
        \end{array}
      \right),
      \left(
        \begin{array}{ccc}
          \frac{-1}{4} & 0 & \frac{1}{4} \\
          0 & \frac{-1}{4} & \frac{1}{4} \\
          \frac{1}{4} & \frac{1}{4} & \frac{1}{2} \\
        \end{array}
      \right)
\right).
$$
Then for any $s\in \mathcal S_k$, $g_s(\mathcal C)>1$ if and only if $\alpha(s)=3$ and $\gamma(s)=0$.
\end{lem}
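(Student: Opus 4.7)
The plan is to reduce the claim to a simple closed-form expression for $g_s^{(l)}(\mathcal{C})$ and then finish by a short case analysis indexed by $(\alpha(s), \gamma(s))$. First, I would observe that the three matrices comprising $\mathcal{C}$ admit the uniform description
$$
c^{(l)}_{i,j} \;=\; \tfrac{1}{2}\,[i{=}j{=}l] \;-\; \tfrac{1}{4}\,[i{=}j{\neq}l] \;+\; \tfrac{1}{4}\,[\,i{\neq}j,\ l\in\{i,j\}\,].
$$
For any $s\in\mathcal{S}_3$, set $R_l(s)=|\{(i,j)\in s : i=l\}|$, $C_l(s)=|\{(i,j)\in s : j=l\}|$, and $D_l(s)=[(l,l)\in s]$. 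A direct tally of the four possible values of $c^{(l)}_{i,j}$ then yields the key identity
$$
g_s^{(l)}(\mathcal{C}) \;=\; \frac{R_l(s) + C_l(s) + D_l(s) - \gamma(s)}{4}.
$$
Note that $C_l(s)\in\{0,1\}$ because the second coordinates of the members of $s\in\mathcal{S}_3$ are distinct, while $\sum_l R_l=\sum_l C_l=\alpha(s)$ and $\sum_l D_l=\gamma(s)$.

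With this identity in hand, the ``if'' direction is immediate: if $\alpha(s)=3$ and $\gamma(s)=0$, then $C_l(s)=1$ and $D_l(s)=0$ for every $l$, so each $g_s^{(l)}(\mathcal{C})=(R_l(s)+1)/4>0$, and summing gives $g_s(\mathcal{C})=(\alpha(s)+3)/4 = 3/2 > 1$.

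For the ``only if'' direction, I would enumerate the remaining pairs $(\alpha,\gamma)\in\{(1,0),(1,1),(2,0),(2,1),(2,2),(3,1),(3,2),(3,3)\}$. In each case, the closed form together with $C_l\in\{0,1\}$ and the sum constraints pins down the sign of every $g_s^{(l)}$, after which $g_s$ collapses to an elementary expression bounded above by $1$. For instance, when $\alpha=3,\gamma=1$, if the unique diagonal in $s$ is $(l_0,l_0)$ then $C_l=1$ for every $l$ and $D_{l_0}=1$, so $R_l+C_l+D_l\ge\gamma$ holds for each $l$, every $g_s^{(l)}$ is non-negative, and $g_s=\sum_l g_s^{(l)}=(\alpha-\gamma)/2=1$.

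The only bookkeeping nuisance is that when $\gamma\ge 2$ there may be an index $l$ with $R_l+C_l+D_l<\gamma$ (precisely when $l$ is not incident to any element of $s$); such an $l$ contributes $\gamma-(R_l+C_l+D_l)$ to $g_s$ rather than $(R_l+C_l+D_l)-\gamma$, which flips a small number of signs and is handled in a line of algebra per case. Aside from this, the remaining verification is entirely routine arithmetic.
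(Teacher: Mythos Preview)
Your approach is correct and is more structured than the paper's. The paper simply partitions by $\gamma(s)\in\{0,1,2,3\}$ and, within each bucket, evaluates $g_s(\mathcal C)$ by direct inspection of the matrices (for $\gamma=0$ it implicitly uses that $|g_s^{(l)}|=\tfrac14 m_{Ind_s}(l)$, then sums; the remaining cases are brute-force). Your uniform identity
\[
g_s^{(l)}(\mathcal C)=\frac{R_l(s)+C_l(s)+D_l(s)-\gamma(s)}{4}
\]
is a genuine simplification: it makes the sign pattern of each summand transparent and collapses most $(\alpha,\gamma)$ cases to the single line $g_s=\sum_l g_s^{(l)}=(\alpha-\gamma)/2$. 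The paper's virtue is brevity (no preliminary derivation), while yours explains \emph{why} the threshold is exactly $1$ and would scale better if one wanted to analyse perturbations of $\mathcal C$.

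One small inaccuracy to fix before you write it out in full: the ``bookkeeping nuisance'' of a negative $g_s^{(l)}$ is not confined to $\gamma\ge 2$, nor does it occur \emph{precisely} when $l$ is not incident to $s$. For instance, with $\gamma=1$ and $s=\{(i,i)\}$ the index $l\notin\{i\}$ already gives $R_l+C_l+D_l=0<1$; and with $\gamma=2$, $\alpha=3$, say $s=\{(1,1),(2,2),(1,3)\}$, the index $l=3$ is incident (via the column) yet $R_3+C_3+D_3=1<2$. None of this threatens the argument, since you already commit to a per-case check, but your description of when the sign flip occurs should be corrected.
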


\begin{proof}
The result can be obtained by considering the following cases:
\begin{description}
  \item[$1)$] $\gamma(s)=0$. In this case, it is easy to see that
  $$
  g_s(\mathcal C)=\frac{1}{4}\sum_{i=1}^3 m_{Ind_s}(i)=2\alpha(s)=\left\{
   \begin{array}{ll}
   \frac{1}{2}, & \hbox{$\alpha(s)=1$;} \\
   1, & \hbox{$\alpha(s)=2$;} \\
   \frac{3}{2}, & \hbox{$\alpha(s)=3$.}
   \end{array}
   \right.
  $$
  \item[$2)$] $\gamma(s)=1$. In this case, it is easy to check that
  $$
  g_s(\mathcal C)=\left\{
   \begin{array}{ll}
   \frac{1}{2}+\frac{1}{4}+\frac{1}{4}=1, & \hbox{$\alpha(s)=1$;} \\
   \frac{3}{4}+\frac{1}{4}=1, & \hbox{$s=\{(i,i),(i,j)\}$, where $i\neq j$;} \\
   \frac{1}{2}, & \hbox{$s=\{(i,i),(k,j)\}$, where $i,j,k$ are distinct;} \\
   1, & \hbox{$\alpha(s)=3$.}
   \end{array}
   \right.
  $$
  \item[$3)$] $\gamma(s)=2$. In this case, it is easy to check that $g_s(\mathcal C)=1$.
  \item[$4)$] $\gamma(s)=3$. In this case, obviously, $g_s(\mathcal C)=0$.
\end{description}
\end{proof}

We are now ready for our main result.
\begin{thm}
Each stable $3$-pair network has a linear routing solution.
\end{thm}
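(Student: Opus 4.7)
The plan is to take the canonical matrix $\mathcal{C}$ from Lemma~\ref{lem-1-2} as the default candidate linear routing solution. By Theorem~\ref{thm-comput} together with Lemma~\ref{lem-1-2}, this $\mathcal{C}$ certifies a linear routing solution as soon as every $s\in\mathcal{S}_\mathcal{N}$ avoids the ``bad'' shape $\alpha(s)=3$ and $\gamma(s)=0$. Since $\mathcal{N}$ is strongly reachable, each such bad $s$ must have the form $\{(i_1,1),(i_2,2),(i_3,3)\}$ with $i_k\neq k$, and it arises exactly from an arc of $\mathcal{N}$ simultaneously carrying three $\mathbf{P}$-paths, one to each sink, none of which is the ``diagonal'' $P_{s_k,t_k}$. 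Thus the whole task reduces to controlling, for every stable $3$-pair network, the existence of such arcs.

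First I would restrict attention to the two sub-types of bad $s$: (i)~the ``cyclic'' ones $s=\{(j,i),(l,j),(i,l)\}$ corresponding to a genuine derangement of $(1,2,3)$, and (ii)~the ``repeated-source'' ones like $s=\{(2,1),(3,2),(2,3)\}$ where some index $h$ repeats. For type~(ii), Lemma~\ref{lem-0-1} applied to the repeated index together with Lemma~\ref{lem-basic} should immediately forbid $s\in\mathcal{S}_\mathcal{N}$ whenever the three paths from $s_h$ do not share an appropriate $\mathbf{P}$-l.c.s. For type~(i), Theorem~\ref{thm-0-3} already rules out the existence of a $\{P_{s_j,t_i},P_{s_l,t_j},P_{s_i,t_l}\}$-l.c.s.\ under the natural configuration of the parameters $m_{i,j}^{\cdot}$; I would complement this by exploiting the explicit topological enumeration of $\mathcal{N}_{t_i,t_j}$ provided by Theorems~\ref{lem-degenerated-config} and~\ref{thm-max-config} (three degenerated shapes and five non-degenerated shapes, the latter further split by type~$1$ versus type~$2$ per Corollary~\ref{lem-0-2}) to pin down, for every joint configuration of $\mathcal{N}_{t_1,t_2},\mathcal{N}_{t_1,t_3},\mathcal{N}_{t_2,t_3}$, precisely which triples of paths can co-exist on a single arc.

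For each of the finitely many remaining joint configurations, I would verify directly that no bad $s$ sits in $\mathcal{S}_\mathcal{N}$, so the canonical $\mathcal{C}$ is itself a linear routing solution. In any residual case where a bad $s$ stubbornly survives this sieve, I would construct a tailored matrix $\mathcal{C}'$, still satisfying the commodity conditions~(\ref{commodity condition}), by perturbing the off-diagonal entries of $\mathcal{C}$ along the cyclic index pattern of $s$: shifting mass between the flows $f_{i_k,k}$ and $f_{k,k}$ in a way that cancels sign, so that $|g_s^{(l)}(\mathcal{C}')|$ drops below $1$ on the offending $s$ while leaving the identity $\sum_i c^{(l)}_{i,j}=\sum_j c^{(l)}_{i,j}=0$ (off the diagonal) untouched.

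The main obstacle will be the combinatorial case analysis at the joint level: three pairwise subnetworks, each with up to eight admissible shapes, generate many joint configurations, and for each one I must carefully track which arcs are $\mathbf{P}_{t_j}$-arcs for which $j$'s. Equally delicate is the verification step after any modification: once $\mathcal{C}$ is perturbed into $\mathcal{C}'$, Lemma~\ref{lem-1-2} no longer automatically certifies $g_{s'}(\mathcal{C}')\leq 1$ for the ``easy'' $s'$ with $\alpha(s')\leq 2$ or $\gamma(s')\geq 1$, so I will need to re-check these using the multiplicities $m_{Ind_{s'}}(\cdot)$ on every arc that the perturbation touches, and to confine the perturbation locally enough that no previously-safe constraint is broken.
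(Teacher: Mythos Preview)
Your proposal is a reasonable high-level outline, but it diverges from the paper's proof in a way that leaves real gaps. The paper does \emph{not} try to make the canonical matrix $\mathcal{C}$ of Lemma~\ref{lem-1-2} work universally by pruning bad $s$; instead, it organizes the whole case analysis around the parameters $m_{i,j}^i$ (which source indices have $\ell=1$ in each pairwise subnetwork). This yields a handful of cases (Case~1 with subcases 1.1--1.3; Case~2 with subcases 2.1--2.4), and in most of them the paper simply writes down an explicit, \emph{different} matrix tailored to the constraints on $\mathcal{S}_\mathcal{N}$---matrices with denominators $4,6,8,12,14$, some with an entire zero row in one component, none obtained by perturbing the canonical $\mathcal{C}$ along a visible cyclic pattern. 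Only Cases~1.2 and~2.4 use the canonical $\mathcal{C}$; only Case~2.4 invokes Theorem~\ref{thm-0-3}, precisely because that case is where its full hypothesis (all six equalities $m_{i,j}^i=l$, $m_{i,j}^j=i$, \ldots) is forced; and Lemma~\ref{lem-0-1} is used only inside Cases~2.2 and~2.3 to split each into two sub-subcases.

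Your plan has two concrete weaknesses. First, Lemma~\ref{lem-0-1} and Theorem~\ref{thm-0-3} carry hypotheses on the $m$-parameters that you never establish: for a type-(ii) bad $s$ such as $\{(2,1),(3,2),(2,3)\}$, Lemma~\ref{lem-0-1} requires $2\in m_{1,2}^{l_1}\cap m_{1,3}^{l_2}\cap m_{2,3}^{l_3}$, which is plainly false in, say, the paper's Case~2.2 (where $m_{1,2}^1=\{3\}$, $m_{1,2}^2=\{1\}$). So your type-(ii) sieve leaks, and likewise Theorem~\ref{thm-0-3} is unavailable outside the special configuration of Case~2.4. Second, the fallback---``perturb $\mathcal{C}$ along the cyclic index pattern''---is exactly where the content lies, and the paper's explicit matrices show what is needed is not a small perturbation but a genuinely new construction in each case, followed by a direct check that $g_s(\mathcal{C})\leq 1$ over an explicitly listed superset $\mathcal{S}\supseteq\mathcal{S}_\mathcal{N}$. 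Without producing those matrices your argument stops precisely where the work begins. The paper's $m$-based split is what makes this finite verification tractable; by contrast, enumerating joint shapes of the three $\mathcal{N}_{t_i,t_j}$ does not by itself determine which \emph{triples} of paths coexist on a single arc, so it does not pin down the $\alpha(s)=3$ part of $\mathcal{S}_\mathcal{N}$ either.
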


\begin{proof}
For the stable $3$-pair network $\mathcal{N}$, we consider the following two cases:
\begin{description}
  \item[$1)$] there exist distinct $i,j,l\in\{1,2,3\}$, $m_{i,j}^i\cap\{i,j\}\neq\emptyset$ and $m_{i,l}^i\cap\{i,l\}\neq\emptyset$;
  \item[$2)$] for any distinct $i,j,l\in\{1,2,3\}$, either $m_{i,j}^i=l$ or $m_{i,l}^i=j$.
\end{description}

For Case $1)$, we have the following subcases:
\begin{description}
  \item[$1.1)$] $i\in m_{i,j}^i\cap m_{i,l}^i$;
  \item[$1.2)$] $j\in m_{i,j}^i$ and $l\in m_{i,l}^i$;
  \item[$1.3)$] $i\in m_{i,j}^i$ and $l\in m_{i,l}^i$.
\end{description}

In the following, without loss of generality, we assume $i=1$, $j=2$ and $l=3$.

For Case $1.1)$, if $1\in m_{1,2}^1\cap m_{1,3}^1$, then $r_1$ is disjoint from $\mathcal N':=\{g_2, g_3, b_2, b_3\}$, which is a stable (hence extra strongly reachable) $2$-pair network. By~\cite{CH17}, $\mathcal N'$ always has a linear routing solution
$$
\left(\left(
  \begin{array}{cc}
    \frac{3}{4} & \frac{1}{4} \\
    \frac{1}{4} & \frac{-1}{4} \\
  \end{array}
\right),
\left(
  \begin{array}{cc}
    \frac{-1}{4} & \frac{1}{4} \\
    \frac{1}{4} & \frac{3}{4} \\
  \end{array}
\right)
 \right).
$$
Hence, $\mathcal{N}$ has the following linear routing solution:
$$
\left(\left(
        \begin{array}{ccc}
          1 & 0 & 0 \\
          0 & 0 & 0 \\
          0 & 0 & 0 \\
        \end{array}
      \right),
      \left(
        \begin{array}{ccc}
          0 & 0 & 0 \\
          0 & \frac{3}{4} & \frac{1}{4} \\
          0 & \frac{1}{4} & -\frac{1}{4} \\
        \end{array}
      \right),
      \left(
        \begin{array}{ccc}
          0 & 0 & 0 \\
          0 & -\frac{1}{4} & \frac{1}{4} \\
          0 & \frac{1}{4} & \frac{3}{4} \\
        \end{array}
      \right)
\right).
$$

For Case $1.2)$, consider all $s\in \mathcal S_{\mathcal N}\subseteq \mathcal S_3$ such that $\alpha(s)=3$. Let $s=\{(l_1,1),(l_2, 2), (l_3, 3)\}$. If $l_1=1$, then obviously $\gamma(s)\neq 0$; if $l_1=2$, then since $2\in m_{1,2}^1$, we have $l_2=2$ and hence $\gamma(s)\neq 0$; and if $l_1=3$, since $3\in m_{1,3}^1$, we have $l_3=3$ and hence $\gamma(s)\neq 0$. Thus, for any $s\in \mathcal S_{\mathcal N}$ such that $\alpha(s)=3$, we have $\gamma(s)\neq 0$.
By Lemma \ref{lem-1-2},
$$
\left(\left(
        \begin{array}{ccc}
          \frac{1}{2} & \frac{1}{4} &\frac{1}{4} \\
          \frac{1}{4}& \frac{-1}{4} & 0 \\
          \frac{1}{4} & 0 & \frac{-1}{4} \\
        \end{array}
      \right),
      \left(
        \begin{array}{ccc}
          \frac{-1}{4} & \frac{1}{4} & 0 \\
          \frac{1}{4} & \frac{1}{2} & \frac{1}{4} \\
          0 & \frac{1}{4} & \frac{-1}{4} \\
        \end{array}
      \right),
      \left(
        \begin{array}{ccc}
          \frac{-1}{4} & 0 & \frac{1}{4} \\
          0 & \frac{-1}{4} & \frac{1}{4} \\
          \frac{1}{4} & \frac{1}{4} & \frac{1}{2} \\
        \end{array}
      \right)
\right).
$$
is a linear solution of $\mathcal N$.

For Case $1.3)$, since $1\in m_{1,2}^1$ and $3\in m_{1,3}^1$, by Lemma \ref{lem-basic}, we have
\begin{equation*}
\begin{split}
  \mathcal S_{\mathcal N}\subseteq \mathcal S:
  &=\{\{(i,j)\}: 1\leq i,j\leq 3\} \\
  &\cup \{\{(i,1),(j,2)\}: i=2,3;j=1,2,3\}\cup\{\{(1,1),(1,2)\}\}\\
  &\cup \{\{(i,1),(l,3)\}: i=1,2;l=1,2,3\}\cup\{\{(3,1),(3,3)\}\}\\
  &\cup \{\{(j,2),(l,3)\}: j,l=1,2,3\}\\
  &\cup \{\{(1,1),(1,2),(l,3)\}:l=1,2,3\}\cup\{\{(2,1),(j,2),(l,3)\}:j,l=1,2,3\}\\
  &\cup\{\{(3,1),(j,2),(3,3)\}:j=1,2,3\}.
\end{split}
\end{equation*}
Let
$$
\mathcal C=\left(\left(
        \begin{array}{ccc}
          \frac{3}{4} & 0 &\frac{1}{4} \\
          0& 0 & 0 \\
         \frac{1}{4} & 0 & \frac{-1}{4} \\
        \end{array}
      \right),
      \left(
        \begin{array}{ccc}
          0 & 0 & 0 \\
          0 & \frac{3}{4} & \frac{1}{4} \\
          0 & \frac{1}{4} & \frac{-1}{4} \\
        \end{array}
      \right),
      \left(
        \begin{array}{ccc}
          \frac{-1}{4} & 0 & \frac{1}{4} \\
          0 & \frac{-1}{4} & \frac{1}{4} \\
          \frac{1}{4} & \frac{1}{4} & \frac{1}{2} \\
        \end{array}
      \right)
\right).
$$
Through straightforward computations, one can verify that for any $s\in \mathcal S$, $g_s(\mathcal C)\leq 1$. Hence, by Theorem~\ref{thm-comput}, $\mathcal C$ is a linear solution of $\mathcal N$, which completes the proof of Case $1)$.

For Case $2)$, without loss of generality, we assume $m_{i,j}^i=l$. Note that by Lemma \ref{lem-crossing}, if  $m_{i,j}^i=l$, then $m_{i,j}^j\cap\{i,j\}\neq \emptyset$, which further implies $m_{l,j}^j=i$ by the assumption of this case. Hence, by Lemma \ref{lem-crossing}, we have $m_{l,j}^l\cap\{l,j\}\neq\emptyset$, which implies $m_{l,i}^l=j$, again by the assumption of this case, and further implies $m_{l,i}^i\cap\{i,l\}\neq\emptyset$ by Lemma \ref{lem-crossing}. Finally, we have $l\in m_{i,j}^i$, $j\in m_{i,l}^l$ and $i\in m_{j,l}^j$. Consider the following subcases:
\begin{description}
  \item[$2.1)$]   $j\in m_{i,j}^j$, $i\in m_{i,l}^i$ and $l\in m_{j,l}^l$;
  \item[$2.2)$]   $i\in m_{i,j}^j$, $i\in m_{i,l}^i$ and $l\in m_{j,l}^l$;
  \item[$2.2')$]  $j\in m_{i,j}^j$, $i\in m_{i,l}^i$ and $j\in m_{j,l}^l$;
  \item[$2.2'')$] $j\in m_{i,j}^j$, $l\in m_{i,l}^i$ and $l\in m_{j,l}^l$;
  \item[$2.3)$]   $i\in m_{i,j}^j$, $i\in m_{i,l}^i$ and $j\in m_{j,l}^l$;
  \item[$2.3')$]  $i\in m_{i,j}^j$, $l\in m_{i,l}^i$ and $l\in m_{j,l}^l$;
  \item[$2.3'')$] $j\in m_{i,j}^j$, $l\in m_{i,l}^i$ and $j\in m_{j,l}^l$;
  \item[$2.4)$]   $i\in m_{i,j}^j$, $l\in m_{i,l}^i$ and $j\in m_{j,l}^l$.
\end{description}
It is easy to check that Cases $2.2')$ and $2.2'')$ can be obtained form Case $2.2)$ (resp. Cases $2.3')$ and $2.3'')$ can be obtained form Case $2.3)$) by the relabelling: $i\mapsto j$, $j\mapsto l$, $l\mapsto i$ and the relabelling: $i\mapsto l$, $j\mapsto i$, $l\mapsto j$, respectively. So, in the following, we only need to consider Cases $2.1), 2.2), 2.3), 2.4)$.

For Case $2.1)$, without loss of generality, we assume $i=1$, $j=2$ and $l=3$ and thus $2\in m_{1,2}^2$, $1\in m_{1,3}^1$ and $3\in m_{2,3}^2$. Hence, paths $r_1$, $g_2$ and $b_3$ are pairwise disjoint and the network has a linear routing solution
$$
\left(\left(
        \begin{array}{ccc}
          1 & 0 & 0 \\
          0 & 0 & 0 \\
          0 & 0 & 0 \\
        \end{array}
      \right),
      \left(
        \begin{array}{ccc}
          0 & 0 & 0 \\
          0 & 1 & 0 \\
          0 & 0 & 0 \\
        \end{array}
      \right),
      \left(
        \begin{array}{ccc}
          0 & 0 & 0 \\
          0 & 0 & 0 \\
          0 & 0 & 1 \\
        \end{array}
      \right)
\right).
$$

For Case $2.2)$, without loss of generality, we assume $i=1$, $j=2$ and $l=3$ and thus $1\in m_{1,2}^2\cap m_{1,3}^1\cap m_{2,3}^2$; $2\in m_{1,3}^3$ and $3\in m_{1,2}^1\cap m_{2,3}^3$. By Lemma \ref{lem-basic}, we have
\begin{equation*}
\begin{split}
  \mathcal S_{\mathcal N}\subseteq \mathcal S: &=\{\{(i,j)\}: 1\leq i,j\leq 3\}\\
  &\cup\{\{(i,1),(j,2)\}: i=1,2;j=2,3\}\cup\{\{(1,1),(1,2)\}, \{(3,1),(3,2)\}\}\\
  &\cup\{\{(i,1),(l,3)\}: i=2,3;l=1,3\}\cup\{\{(1,1),(1,3)\},\{(2,1),(2,3)\}\}\\
  &\cup\{\{(j,2),(l,3)\}: j=2,3;l=1,2\}\cup\{\{(1,2),(1,3)\},\{(3,2),(3,3)\}\}\\
  &\cup\{\{(1,1),(j,2),(1,3)\}:j=1,2,3\}\cup\{\{(2,1),(2,2),(l,3)\}:l=1,2\}\\
  &\cup\{\{(2,1),(3,2),(l,3)\}: l=1,2,3\}\cup\{\{(3,1),(3,2),(l,3)\}:l=1,3\}.
\end{split}
\end{equation*}
Then, by Lemma \ref{lem-0-1}, we have the following two subcases:

If $\{(1,1),(1,2)\}\notin\mathcal S_{\mathcal N}$, then, $\mathcal S_{\mathcal N}\subseteq\mathcal S\setminus\{(1,1),(1,2)\}$ and by Theorem~\ref{thm-comput}, one can check that
$$
\left(\left(
        \begin{array}{ccc}
          \frac{8}{14} & \frac{7}{14} &\frac{-1}{14} \\
          \frac{3}{14}& \frac{-5}{14} & \frac{2}{14} \\
          \frac{3}{14} & \frac{-2}{14} & \frac{-1}{14} \\
        \end{array}
      \right),
      \left(
        \begin{array}{ccc}
          \frac{-3}{14} & \frac{7}{14} & \frac{-4}{14} \\
          \frac{3}{14} & \frac{7}{14} & \frac{4}{14} \\
          0 & 0 & 0 \\
        \end{array}
      \right),
      \left(
        \begin{array}{ccc}
          \frac{-3}{14} & 0 & \frac{3}{14} \\
          0 & \frac{-2}{14} & \frac{2}{14} \\
          \frac{3}{14} & \frac{2}{14} & \frac{9}{14} \\
        \end{array}
      \right)
\right)
$$
is a linear solution.

If $\{(1,1),(1,3)\}\notin\mathcal S_{\mathcal N}$, then $\mathcal S_{\mathcal N}\subseteq\mathcal S\setminus\{\{(1,1),(j,2),(1,3)\}:j=2,3\}$ and by Theorem~\ref{thm-comput}, one can check that
$$
\left(\left(
        \begin{array}{ccc}
          \frac{6}{12} & \frac{3}{12} &\frac{3}{12} \\
          \frac{3}{12}& \frac{-3}{12} & 0\\
          \frac{3}{12} & 0 & \frac{-3}{12} \\
        \end{array}
      \right),
      \left(
        \begin{array}{ccc}
          \frac{-3}{12} & \frac{4}{12} & \frac{-1}{12} \\
          \frac{3}{12} & \frac{7}{12} & \frac{2}{12} \\
          0 & \frac{1}{12} & \frac{-1}{12} \\
        \end{array}
      \right),
      \left(
        \begin{array}{ccc}
          \frac{-3}{12} & \frac{1}{12} & \frac{2}{12} \\
          0 & \frac{-2}{12} & \frac{2}{12} \\
          \frac{3}{12} & \frac{1}{12} & \frac{8}{12} \\
        \end{array}
      \right)
\right)
$$
is a linear solution, which proves the theorem for Case $2.2)$.

For Case $2.3)$, without loss of generality, we assume $i=1$, $j=2$ and $l=3$. It can be readily verified that $1\in m_{1,2}^2\cap m_{1,3}^1\cap m_{2,3}^2$; $2\in m_{1,3}^3\cap m_{2,3}^3$ and $3\in m_{1,2}^1$. By Lemma \ref{lem-basic},
\begin{equation*}
\begin{split}
  \mathcal S_{\mathcal N}\subseteq \mathcal S:&=\{\{(i,j)\}: 1\leq i,j\leq 3\}\\
  &\cup\{\{(i,1),(j,2)\}: i=1,2;j=2,3\}\cup\{\{(1,1),(1,2)\}, \{(3,1),(3,2)\}\}\\
  &\cup\{\{(i,1),(l,3)\}: i=2,3;l=1,3\}\cup\{\{(1,1),(1,3)\},\{(2,1),(2,3)\}\}\\
  &\cup\{\{(j,2),(l,3)\}: j=2,3;l=1,3\}\cup\{\{(1,2),(1,3)\},\{(2,2),(2,3)\}\}\\
  &\cup\{\{(1,1),(j,2),(1,3)\}:j=1,2,3\}\cup\{\{(2,1),(2,2),(l,3)\}:l=1,2,3\}\\
  &\cup\{\{(2,1),(3,2),(l,3)\}: l=1,3\}\cup\{\{(3,1),(3,2),(l,3)\}:l=1,3\}.
\end{split}
\end{equation*}
By Lemma \ref{lem-0-1}, we have the following two subcases:

If $\{(1,1),(1,2)\}\notin\mathcal S_{\mathcal N}$, then, $\mathcal S_{\mathcal N}\subseteq\mathcal S\setminus\{(1,1),(1,2)\}$ and  by Theorem~\ref{thm-comput}, one can check that
$$
\left(\left(
        \begin{array}{ccc}
          \frac{4}{8} & \frac{3}{8} &\frac{1}{8} \\
          \frac{2}{8}& \frac{-2}{8} & 0 \\
          \frac{2}{8} & \frac{-1}{8} & \frac{-1}{8} \\
        \end{array}
      \right),
      \left(
        \begin{array}{ccc}
          \frac{-2}{8} & \frac{3}{8} & \frac{-1}{8} \\
          \frac{2}{8} & \frac{4}{8} & \frac{2}{8} \\
          0           & \frac{1}{8}  & \frac{-1}{8}  \\
        \end{array}
      \right),
      \left(
        \begin{array}{ccc}
          \frac{-2}{8} & 0 & \frac{2}{8} \\
          0 & \frac{-2}{8} & \frac{2}{8} \\
          \frac{2}{8} & \frac{2}{8} & \frac{4}{8} \\
        \end{array}
      \right)
\right)
$$
is a linear solution.

If $\{(1,1),(1,3)\}\notin\mathcal S_{\mathcal N}$, then, $\mathcal S_{\mathcal N}\subseteq\mathcal S \setminus\{\{(1,1),(1,3)\}\}\setminus\{\{(1,1),(j,2),(1,3)\}:j=2,3\}$ and by Theorem~\ref{thm-comput}, one can check that
$$
\left(\left(
        \begin{array}{ccc}
          \frac{4}{6} & \frac{1}{6} &\frac{1}{6} \\
          \frac{1}{6}& \frac{-1}{6} & 0\\
          \frac{1}{6} & 0 & \frac{-1}{6} \\
        \end{array}
      \right),
      \left(
        \begin{array}{ccc}
          \frac{-2}{6} & \frac{3}{6} & \frac{-1}{6} \\
          \frac{2}{6} & \frac{2}{6} & \frac{2}{6} \\
          0           & \frac{1}{6} & \frac{-1}{6} \\
        \end{array}
      \right),
      \left(
        \begin{array}{ccc}
              0      &    0         &    0       \\
          \frac{-1}{6}& \frac{-1}{6} & \frac{2}{6} \\
          \frac{1}{6} & \frac{1}{6} & \frac{4}{6} \\
        \end{array}
      \right)
\right)
$$
is a linear solution, which proves the theorem for Case $2.3)$.

For Case $2.4)$, if one of $\mathcal N_{t_1, t_2}$, $\mathcal N_{t_1, t_3}$ and $\mathcal N_{t_2, t_3}$ is degenerated, then $\mathcal N$ has a linear solution by previous cases. So, we assume all of them are non-degenerated and without loss of generality $i=1$, $j=2$ and $l=3$. Hence, $m_{1,2}^1=3$, $m_{1,2}^2=1$; $m_{1,3}^1=3$, $m_{1,3}^3=2$; and $m_{2,3}^2=1$, $m_{2,3}^3=2$. In the following, consider $s\in\mathcal S_{\mathcal N}\subseteq \mathcal S_3$ such that $\alpha(s)=3$. Let $s=\{(l_1,1),(l_2,2),(l_3,3)\}$. If $l_1=3$, then since $m_{1,3}^1=3$,  we have $l_3=3$; if $l_2=1$, then since $m_{1,2}^2=1$, we have $l_1=1$; if $l_3=2$, then since $m_{2,3}^3=2$, we have $l_2=2$. Hence, $\gamma(s)=0$ only if $s=\{(2,1),(3,2),(1,3)\}$, which however, is impossible by Theorem~\ref{thm-0-3}.

Hence, by Lemma~\ref{lem-1-2},
$$
\left(\left(
        \begin{array}{ccc}
          \frac{1}{2} & \frac{1}{4} &\frac{1}{4} \\
          \frac{1}{4}& \frac{-1}{4} & 0 \\
          \frac{1}{4} & 0 & \frac{-1}{4} \\
        \end{array}
      \right),
      \left(
        \begin{array}{ccc}
          \frac{-1}{4} & \frac{1}{4} & 0 \\
          \frac{1}{4} & \frac{1}{2} & \frac{1}{4} \\
          0 & \frac{1}{4} & \frac{-1}{4} \\
        \end{array}
      \right),
      \left(
        \begin{array}{ccc}
          \frac{-1}{4} & 0 & \frac{1}{4} \\
          0 & \frac{-1}{4} & \frac{1}{4} \\
          \frac{1}{4} & \frac{1}{4} & \frac{1}{2} \\
        \end{array}
      \right)
\right).
$$
is a linear routing solution of $\mathcal N$, which completes the proof.
\end{proof}

\section{Conclusions and Future Work}

We have settled in this work the Langberg-M\'{e}dard multiple unicast conjecture for stable $3$-pair networks. The conjecture in more general settings, e.g., unstable $3$-pair networks and even more general $k$-pair networks, is currently under investigation.


\begin{thebibliography}{}

\bibitem{CH15}
K.~Cai and G.~Han, ``On network coding advantage for multiple unicast networks,'' in \emph{Proc. ISIT}, 2015.

\bibitem{CH16}
K.~Cai and G.~Han, ``Coding advantage in communications among peers,'' in \emph{Proc. ISIT}, 2016.

\bibitem{CH17}
K.~Cai and G.~Han, ``On the Langberg-M\'{e}dard multiple unicast conjecture,'' \emph{Journal of Combinatorial Optimization}, vol. 34, no. 4, pp. 1114-1132, 2017.

\bibitem{CH18}
K.~Cai and G.~Han, ``On the Langberg-M\'{e}dard $k$-unicast conjecture with $k=3,4$,'' in \emph{Proc. ISIT}, 2018.


\bibitem{CLRS09}
Thomas H. Cormen, Charles E. Leiserson, Ronald L. Rivest, and Clifford Stein, ``Introduction to
algorithms,'' MIT Press, Cambridge, MA, third edition, 2009.

\bibitem{Han09}
G.~Han, ``Menger's paths with minimum mergings,'' in \emph{Proc. ITW}, 2009.

\bibitem{Harv06}
N.~Harvey, R.~Kleinberg and A.~Lehman, ``On the capacity of information networks,'' \emph{IEEE Trans. Inf. Theory}, vol. 52, no. 6, pp. 2345-2364, Jun. 2006.

\bibitem{Jain06}
K.~Jain, V.~V.~Vazirani and G.~Yuval, ``On the capacity of multiple unicast sessions in undirected graphs,'' \emph{IEEE/ACM Trans. Networking}, vol. 14, pp. 2805-2809, Jun. 2006.

\bibitem{Langberg06}
M.~Langberg, A. Sprintson and J. Bruck, ``The encoding complexity of network coding,'' \emph{IEEE Trans. Inf. Theory}, vol. 52, no. 6, pp. 2386-2397, Jun. 2006.


\bibitem{Langberg09}
M.~Langberg and M.~M$\acute{e}$dard, ``On the multiple unicast network coding conjecture,'' in \emph{Proc. 47th Annual Allerton}, 2009.

\bibitem{Li042}
Z.~Li and B.~Li, ``Network coding: The case of multiple unicast sessions,'' in \emph{Proc. 42nd Annual Allerton}, 2004.

\bibitem{Zongpeng12}
Z.~Li and C.~Wu, ``Space information flow: multiple unicast,'' in \emph{Proc. ISIT 2012}, July 1-6, 2012.

\bibitem{Schrijver03}
A.~Schrijver, ``{\em Combinatorial Optimization},'' Springer-Verlag, 2003.

\bibitem{Xiahou12}
T.~Xiahou, C.~Wu, J.~Huang and Z.~Li, ``A geometric framework for investigating the multiple unicast network coding conjecture,'' in \emph{Proc. NetCod}, 2012.

\bibitem{Yang14}
Y.~Yang, X.~Yin, X.~Chen, Y.~Yang and Z.~Li, ``A note on the multiple-unicast network coding conjecture,'' \emph{IEEE Communications Letters}, vol. 18, no. 5, pp. 869-872, May 2014.

\bibitem{Yeung06}
R. Yeung, S.-Y. Li, and N. Cai, {\em Network Coding Theory (Foundations and Trends in Communications and Information Theory)}, Now Publishers Inc., Hanover, MA, USA, 2006.




\end{thebibliography}
\end{document}